\documentclass[11pt]{article}
\usepackage{graphicx,amsmath,amssymb,mathrsfs}
\usepackage[a4paper,  margin=1in]{geometry}
\usepackage[utf8]{inputenc}
\usepackage[english]{babel}
\usepackage[utf8]{inputenc}
\usepackage{amsmath}
\usepackage{longtable}
\usepackage{subcaption}
\usepackage{amsfonts}
\usepackage{amssymb} 
\usepackage{booktabs,multirow}
\usepackage{float}
\usepackage{algorithm}
\usepackage{algorithmic}
\usepackage{enumitem} 
\usepackage{authblk}
\usepackage{natbib}
\usepackage[none]{hyphenat}
\usepackage{setspace}
\usepackage{tcolorbox}
\setcitestyle{authoryear,open={(},close={)}}

\newtheorem{theorem}{Theorem}[section]
\newtheorem{corollary}{Corollary}[theorem]
\newtheorem{lemma}[theorem]{Lemma} 

\usepackage{hyperref}
\hypersetup{
    colorlinks=true,
    linkcolor=blue,
    filecolor=magenta,      
    urlcolor=cyan,
    citecolor=blue
    }
\usepackage{verbatim}
\immediate\write18{texcount -tex -sum  \jobname.tex > \jobname.wordcount.tex}

\title{\textbf{Estimation of the complexity of a network under a Gaussian graphical model}}

\author{
Nabaneet Das$^{1}$\thanks{Email: \texttt{nabaneet@uni-bremen.de}} , 
Thorsten Dickhaus$^{2}$\thanks{Email: \texttt{dickhaus@uni-bremen.de}} \\
\small $^{1,2}$ Institute for Statistics, University of Bremen, 28359, Bremen, Germany
}

\date{}

\begin{document}

\maketitle 

\begin{abstract}
The proportion of edges in a Gaussian graphical model (GGM) characterizes the complexity of its conditional dependence structure. Since edge presence corresponds to a nonzero entry of the precision matrix, estimation of this proportion can be formulated as a large-scale multiple testing problem. We propose an estimator that combines p-values from simultaneous edge-wise tests, conducted under false discovery rate control, with Storey’s estimator of the proportion of true null hypotheses. We establish weak dependence conditions on the precision matrix under which the empirical cumulative distribution function of the p-values converges to its population counterpart. These conditions cover high-dimensional regimes, including those arising in genetic association studies. Under such dependence, we characterize the asymptotic bias of the Schweder--Spjøtvoll estimator, showing that it is upward biased and thus underestimates the true edge proportion. Simulation studies across a variety of models illustrate the
finite-sample performance of the proposed estimator.
\end{abstract}

\begin{keywords} 

Multiple testing;
Network complexity estimation;
Precision Matrix;
Proportion of false null hypotheses;
Storey's estimator

\end{keywords}

\section{Introduction}
Understanding the relationships among multiple variables is an important problem in many areas of science, including biology, finance, and the social sciences. Accurately capturing these relationships helps us infer networks, identify key variables, and understand underlying mechanisms in complex systems. Gaussian Graphical Models (GGMs) are a widely used framework for representing conditional dependencies among jointly Gaussian variables.  \\[0.05 in]
Formally, let $\mathbf{X} = (X_1, \dots, X_k)^T \sim N (\boldsymbol{\mu}, \boldsymbol{\Sigma})$ be a $k$-dimensional multivariate normal vector with mean $\boldsymbol{\mu}$ and covariance matrix $\boldsymbol{\Sigma} $. A GGM represents $\mathbf{X}$ as an undirected graph $G = (V, E)$, where each vertex $i \in V$ corresponds to a variable $X_i$, and an edge $(i,j) \in E$ indicates that $X_i$ and $X_j$ are conditionally dependent given all other variables. Importantly, a well-known result in this framework states that $(i,j) \in E$ if and only if the corresponding entry of the precision matrix $\mathbf{\Omega} = \boldsymbol{\Sigma} ^{-1}$ is non-zero, i.e., $\omega_{ij} \neq 0$ (see \cite{lauritzen1996graphical}). In this way, GGMs link the problem of network inference directly to the estimation of the precision matrix. \\[0.05 in]
In the sequel, we refer to the set \(E\) as the network structure of \(\mathbf{X}\). In practice, this structure is typically unknown or only partially observed, which motivates the use of statistical methods to estimate \(E\). We assume that an i.i.d.\ sample \(\mathbf{X}_1, \dots, \mathbf{X}_n\) is available, with each \(\mathbf{X}_i\) following the same distribution as \(\mathbf{X}\). Estimation of the network structure is particularly challenging in high-dimensional settings, where the number of variables \(k\) may be large relative to the sample size. Traditional approaches often rely on regularized optimization techniques to estimate sparse precision matrices. Popular methods include the graphical Lasso (see \cite{friedman2008sparse}, \cite{d2008first}), Scaled Lasso (\cite{sun2012scaled}), and Dantzig selector-based approaches (\cite{cai2011constrained}), which encourage sparsity and improve estimation in high dimensions. Other methods based on $l_1$-minimization techniques can be found in \cite{meinshausen2006high}, \cite{yuan2010high}, \cite{zhang2010estimation}, \cite{cai2011constrained}, \cite{liu2012high}, and \cite{xue2012regularized}. \\[0.05 in]
An alternative approach, proposed by \cite{liu2013gaussian}, frames Gaussian graphical model (GGM) estimation as a multiple testing problem. In this framework, each pair of variables corresponds to a hypothesis test:
\begin{equation}
H_{0,ij}: \omega_{ij} = 0 \quad \text{versus} \quad H_{1,ij}: \omega_{ij} \neq 0, \quad 1 \le i < j \le k.
\label{MHT}
\end{equation}
There are $k(k-1)/2$ hypotheses to be tested. Owing to the intrinsic structure of the precision matrix, these hypotheses are mutually dependent, and consequently the associated test statistics exhibit complex dependence. Multiple testing under dependence is a challenging problem, and relatively limited literature provides a comprehensive theoretical understanding of the performance of multiple testing procedures in such settings. \\[0.05 in]
Several type I error criteria have been proposed for simultaneous inference. Procedures controlling the familywise error rate (FWER) are well known to be conservative in high-dimensional regimes; see, for example, \cite{das2021bound}, \cite{dey2024limiting}, and the references therein. As a less stringent alternative, \cite{benjamini1995controlling} introduced the false discovery rate (FDR) and proposed a step-up procedure for its control. The behavior of this procedure under various forms of dependence was subsequently investigated in \cite{benjamini2001control}, \cite{sarkar2002some}, and \cite{FDR2007}. The methodology of \cite{liu2013gaussian} explicitly targets FDR control in high-dimensional GGMs. Their approach is inspired by \cite{efron2001empirical} and employs an empirical approximation to the false discovery proportion (FDP) in order to determine an adaptive rejection threshold. The GFC (GGM estimation with FDR control) procedure uses regularized estimators, such as the Lasso, Scaled Lasso, or Dantzig-type estimators, to compute test statistics for each pair of variables. A key feature of the GFC approach is that it establishes the asymptotic normality of these test statistics, which allows for the computation of two-sided p-values for each edge (see \cite{liu2013gaussian}). By treating GGM estimation as a large-scale multiple testing problem, this procedure provides a systematic and statistically principled way to infer the graph structure. Recent developments on FDR-controlled graphical model learning include knockoff-based and reproducibility-oriented procedures; see, for example, \cite{li2021ggm}, \cite{zhou2022reproducible}, and \cite{zhou2026reproducible}. These works further highlight the continuing interest in controlling false discoveries in high-dimensional graphical model selection. \\[0.05 in]
In this work, we consider the proportion of false null hypotheses, which corresponds to the proportion of edges in the graph and thus quantifies the overall complexity of the Gaussian graphical model. This is a global feature of the graph, in contrast to much of the existing literature, which primarily focuses on local features such as node-wise neighborhoods or pairwise conditional dependencies, typically estimated via neighborhood selection or sparse precision matrix methods ( e.g., \cite{meinshausen2006high}, \cite{raskutti2008model}, \cite{friedman2008sparse}). While these approaches are well suited for recovering local graph structure, they do not directly address the estimation of global network characteristics such as edge density or overall graph complexity. \\[0.05 in]
Estimation of the proportion of false null hypotheses is a well-explored topic within the domain of simultaneous statistical inference and serves as a foundation for data-adaptive procedures. One of the most prominent estimation techniques is the graphical approach introduced by \cite{schweder1982plots}, as formalized in \eqref{storey}. Building upon this foundational work, \cite{benjamini2000adaptive} proposed an adaptive version of the Benjamini--Hochberg (BH) procedure. Subsequently, \cite{storey2002direct} and \cite{storey2004strong} suggested various strategies for selecting the tuning parameters of the Schweder--Spj\o{}tvall estimator. For a comprehensive discussion on adaptive procedures based on this estimator and the corresponding control of the false discovery rate (FDR), we refer the reader to Section 3.1.3 and Theorem 5.4 of \cite{dickhaus2014simultaneous}. Further developments focusing on the density estimation of p-values were explored by \cite{langaas2005estimating} and \cite{genovese2004stochastic}. Specifically, \cite{genovese2004stochastic} discussed the consistency of the density-based estimators originally proposed by \cite{swanepoel1999} and \cite{hengartner1999}. More recently, \cite{patra2016estimation} introduced a consistent estimator based on the empirical cumulative distribution function (ECDF) of p-values under a two-component mixture model. In parallel with spatial-domain approaches, a separate line of work considers estimators constructed in the frequency domain. Estimators based on the empirical characteristic function have been developed for normal mixture models in large-scale multiple testing by \cite{jin2007estimating,jin2008proportion}, where both the null and non-null components are assumed to follow normal distributions, and the goal is to estimate null parameters and the proportion of non-null signals. This framework, as developed in \cite{chen2019uniformly}, relaxes the normality assumption for the null while assuming that the alternative belongs to a location-shift family, and establishes uniform consistency of the estimators using Lebesgue–Stieltjes integral equations and harmonic analysis techniques. More recent work, as presented in \cite{chen2025uniformly}, generalizes the approach to composite null hypotheses and broader families of distributions, including the Gamma family, while maintaining the location-shift assumption for the alternative. Frequency-domain methods provide an alternative to p-value–based approaches and allow consistent estimation when the alternative distribution is a shifted version of the null. \\[0.05 in]
While many existing methods for estimating the proportion of false null hypotheses assume independence among p-values, this assumption is often violated in the Gaussian Graphical Model setting due to the inherent dependence among precision matrix entries. Despite this, the Schweder--Spj\o{}tvall estimator remains an attractive choice due to its simplicity, widespread adoption, and its ability to handle weakly dependent p-values. Moreover, it typically provides a conservative estimate, which facilitates the control of the FDR. Its validity rests on the property that the ECDF of the p-values consistently estimates the average CDF under these conditions.
 \\[0.05 in]
In this article, we combine the GFC procedure of \cite{liu2013gaussian} with
the Schweder--Spj{\o}tvoll estimator to estimate the proportion of false null
hypotheses and, consequently, the graph complexity. The resulting estimate is
conservative in the sense that the proportion of true null hypotheses tends to
be overestimated, leading to an underestimation of the proportion of edges in
the graph. P-values for all pairwise conditional dependence tests are first
computed using the GFC procedure. The Schweder--Spj{\o}tvoll estimator is then
applied, together with the tuning parameter selection methodology of
\cite{storey2002direct,storey2003statistical}, to estimate the proportion of
edges in the graph. This approach leverages the advantages of regularized
precision matrix estimation, asymptotic normality of test statistics, and
robust multiple testing procedures. \\[0.05 in]
The remainder of the paper is organized as follows. In Section 2, we describe the problem setup and present the test statistics derived from the GFC procedure of \cite{liu2013gaussian}, along with a brief overview of the Schweder–Spjøtvall estimator. Section 3 establishes the asymptotic validity of the Schweder–Spjøtvall estimator, showing that the ECDF of the p-values consistently estimates the average CDF as long as the sum of the absolute values of the precision matrix entries remains $o(k^2)$. Section 4 presents simulation studies to evaluate the numerical performance of the estimator. Section 5 illustrates the methodology using a real data application, and the appendix presents the proofs of the theoretical results.

\section{Description of the problem}\label{Desc}
We have $n$ observations $\mathbf{X}_1, ..., \mathbf{X}_n$ from a $k$-variate normal distribution with mean vector $\boldsymbol{\mu}$ and covariance matrix $\boldsymbol{\Sigma} $. Without loss of generality, we will assume that $\boldsymbol{\mu} = \mathbf{0}_k$ (a $ \: k \times 1 $ vector consisting of all zeros). We are interested in the multiple testing problem of \eqref{MHT}. We begin by introducing some basic notation. For any vector \(\mathbf{x} \in \mathbb{R}^k\), let \(\mathbf{x}_{-i}\) denote the \((k-1)\)-dimensional vector obtained by removing the \(i\)-th component (\(x_i\)) from \(\mathbf{x} = (x_1, \dots, x_k)'\). We denote the average of the entries of the vector $\mathbf{x}$ by $\bar{\mathbf{x}} $. For any \(p \times q\) matrix \(\mathbf{A}\), let \(\mathbf{A}_{i, -j}\) denote the \(i\)-th row of \(\mathbf{A}\) with its \(j\)-th entry removed, and let \(\mathbf{A}_{-i, j}\) denote the \(j\)-th column of \(\mathbf{A}\) with its \(i\)-th entry removed. Finally, let \(\mathbf{A}_{-i, -j}\) denote the \((p - 1) \times (q - 1)\) matrix obtained by deleting the \(i\)-th row and the \(j\)-th column of \(\mathbf{A}\). For notational simplicity, we suppress the dependence of the test statistics
$T_{ij}$, the corresponding $p$-values $p_{ij}$, and other related quantities
on $n$ and $k$ throughout the manuscript. It is understood that all these
quantities are computed from $n$ independent $k$-variate observations,
$\mathbf{X}_1, \ldots, \mathbf{X}_n$. We consider the testing framework of \cite{liu2013gaussian} where the test statistic $T_{ij} $ for the hypothesis $H_{0,ij}$ is constructed as follows :  
\begin{enumerate}[label=(\Roman*)]
    \item  For \(\mathbf{X} = ( X_{1}, \dots , X_{k} )' \sim N ( \boldsymbol{ \mu} , \boldsymbol{\Sigma} )\), we can write
\[
X_{ i} = \alpha_i + \mathbf{X}'_{ -i } \boldsymbol{\beta}_i + \varepsilon_{ i}, \quad 1 \leq i \leq k,
\]
where \(\varepsilon_i \sim N\left(0,\;
\sigma_{ii} - \boldsymbol{\Sigma}_{i, -i}
\boldsymbol{\Sigma}^{-1}_{-i, -i}
\boldsymbol{\Sigma}_{-i, i}\right)\) is independent of
\(\mathbf{X}_{-i}\),
\(\alpha_i = \mu_i -
\boldsymbol{\Sigma}_{i, -i}
\boldsymbol{\Sigma}^{-1}_{-i, -i}
\boldsymbol{\mu}_{-i}\), and
\((\sigma_{ij})_{k \times k} = \boldsymbol{\Sigma}\). The regression coefficient vector \(\boldsymbol{\beta}_i\) and the error
terms \(\varepsilon_i\) satisfy
\[
\boldsymbol{\beta}_i = -\omega_{ii}^{-1} \boldsymbol{\Omega}_{-i, i}
\quad \text{and} \quad
\mathrm{Cov}(\varepsilon_i, \varepsilon_j)
= \frac{\omega_{ij}}{\omega_{ii}\omega_{jj}}.
\]
We estimate the GGM by recovering the set of non-null entries of \(\Sigma_{\varepsilon}\),
the covariance matrix of \(\boldsymbol{\varepsilon}
= (\varepsilon_1, \ldots, \varepsilon_k)'\). 
\item For \(\mathbf{X}_l = (X_{l1}, \ldots, X_{lk})'\), we can write
\[
X_{li} = \alpha_i +  \boldsymbol{X}_{l,-i} \boldsymbol{\beta}_i + \varepsilon_{li},
\qquad 1 \leq i \leq k , \: \: 1 \leq l \leq n.
\]
The estimators \(\hat{\boldsymbol{\beta}}_i\) are obtained using the Lasso or the scaled
Lasso, following the GFC procedure of \cite{liu2013gaussian}. Throughout the manuscript, these estimators are assumed to
satisfy the regularity conditions on the initial regression estimators
imposed in Proposition~3.1 of \cite{liu2013gaussian}; these conditions
will not be stated explicitly in the subsequent theoretical results.

\item Define the residuals by
\[
\hat{\varepsilon}_{li}
= X_{li} - \bar{\mathbf{X}}_i - (\mathbf{X}_{l,-i} - \bar{\mathbf{X}}_{-i}) \hat{\boldsymbol{\beta}}_i,
\]
and define the sample covariance coefficients between the residuals by
\[
\hat{r}_{ij}
= \frac{1}{n} \sum_{l=1}^{n}
\hat{\varepsilon}_{li} \hat{\varepsilon}_{lj}.
\]

\item Define
\[
T_{1,ij}
= \frac{1}{n} \Bigg(
\sum_{l=1}^{n} \hat{\varepsilon}_{li} \hat{\varepsilon}_{lj}
+ \sum_{l=1}^{n} \hat{\varepsilon}_{li}^2 \hat{\boldsymbol{\beta}}_{i,j}
+ \sum_{l=1}^{n} \hat{\varepsilon}_{lj}^2 \hat{\boldsymbol{\beta}}_{j-1,i}
\Bigg).
\]
It was shown in \cite{liu2013gaussian} that, under certain regularity conditions and assuming $\log k = o(n)$, 
\[
\sqrt{\frac{n}{\hat{r}_{ii} \hat{r}_{jj}}} 
\Bigg(
T_{1,ij} + b_{n,ij} \frac{\omega_{ij}}{\omega_{ii}\omega_{jj}}
\Bigg)
\xrightarrow{d}
N\!\Bigg(
0, 1 + \frac{\omega_{ij}^2}{\omega_{ii}\omega_{jj}}
\Bigg)
\quad \text{as }  \min\{n,k\}\to\infty,
\]
where $b_{n,ij} = \omega_{ii} \hat{\sigma}_{ii,\varepsilon} + \omega_{jj} \hat{\sigma}_{jj,\varepsilon} - 1$, and $\hat{\sigma}_{ij,\varepsilon}$ denotes the $(i,j)$-th element of the matrix 
$\hat{\Sigma}_{\varepsilon} = \frac{1}{n} \sum_{l=1}^{n} (\boldsymbol{\varepsilon}_l - \bar{\boldsymbol{\varepsilon}}) (\boldsymbol{\varepsilon}_l - \bar{\boldsymbol{\varepsilon}})'$. Here,
\(\boldsymbol{\varepsilon}_l
= (\varepsilon_{l1}, \ldots, \varepsilon_{lk})'\) and
\(\bar{\boldsymbol{\varepsilon}}
= \frac{1}{n} \sum_{l=1}^{n} \boldsymbol{\varepsilon}_l\). 
\medskip
\item Finally, the test statistic for testing \(H_{0,ij}\) is defined as
\begin{equation}
T_{ij}
= \sqrt{\frac{n}{\hat{r}_{ii} \hat{r}_{jj}}}\, T_{1,ij},
\qquad 1 \leq i < j \leq k.
\label{TS}
\end{equation}
\end{enumerate}
Under Condition~\eqref{C1}, defined below, it was shown in \cite{liu2013gaussian}
that
\[
T_{ij} \overset{d}{\to} N(0,1)
\: \:  \text{under } H_{0,ij} \quad
\text{ as } \:  \min\{n,k\}\to\infty.
\]
The convergence in distribution is uniform for
\(1 \leq i < j \leq k\). Condition~\eqref{C1} is defined as follows:
\medskip
\begingroup
\renewcommand{\theequation}{C1}
\renewcommand{\theHequation}{C1}
\begin{equation}\label{C1}
\max_{1 \leq i \leq k} \sigma_{ii} \leq c_0,
\qquad
\max_{1 \leq i \leq k} \omega_{ii} \leq c_0,
\qquad
\text{for some } c_0 > 0,
\quad
\log k = o(n).
\end{equation}
\endgroup
\addtocounter{equation}{-1}
\medskip
The GFC procedure of \cite{liu2013gaussian} for testing $H_{0,ij}$ is implemented as follows :  \\[0.05 in]
For $ 0 < \alpha < 1 $ and $T_{ij}$'s defined in \eqref{TS}, define 
\begin{equation}
\hat t_{\alpha}
= \inf \left\{
0 \le t \le 2\sqrt{\log k} :
\frac{ G(t)\, k(k-1)/2 }
{\max\!\left\{1,\; \sum_{1 \le i < j \le k} \mathbf{I}\!\left(|T_{ij}| > t\right)\right\}}
\le \alpha
\right\}.
\label{t_hat}
\end{equation}
where $ G(t) = P( |Z| > t ) $ and $  Z \sim N(0,1) $. If the infimum in \eqref{t_hat} does not exist, then define $\hat{t}_{ \alpha} = 2 \sqrt{\log k}$. \\[0.05 in]
Finally we reject $H_{0,ij} $ if $ | T_{ij} | > \hat{t}_{ \alpha} $. Under the assumptions of Theorem 3.1 of \cite{liu2013gaussian}, the aforementioned procedure satisfies 
$$   \frac{ FDP}{ \alpha \pi_0  } \to 1 \quad \text{ in probability} \: \: \text{   and   } \: \: \frac{ FDR}{ \alpha \pi_0  } \to 1  \: \: \text{   as }  \min\{n,k\}\to\infty, $$
where $\pi_0 = q_0 / \{k(k-1)/2\}$ and $q_0$ is the number of true null hypotheses. Here $\pi_1 = 1 - \pi_0 $ is the proportion of edges in the graph representing the GGM. \\[0.05 in] 
In view of the above testing procedure, we can define the p-values of the corresponding tests as 
\begin{equation}\label{p-values}
 p_{ij} = G( | T_{ij} | ),   \quad 1 \leq i < j \leq k.
 \end{equation}
Our approach for estimating \(\pi_0\) in the aforementioned problem is to use
the Schweder--Spjøtvall estimator on the corresponding p-values, with the tuning parameter chosen according to the method of
\cite{storey2002direct, storey2003statistical}. Since the largest p-values are most likely to be uniformly distributed, \cite{schweder1982plots} suggested that a conservative estimator of $\pi_0$ is 
\begin{equation}\label{storey}
 \hat{\pi}_{0}  ( \lambda ) = \frac{ 1 }{N (1- \lambda) } \sum\limits_{1 \leq i < j \leq k } I \{ p_{ij} > \lambda  \}= \frac{ W ( \lambda )}{ N (1 - \lambda ) }, \quad 0 \leq \lambda < 1, \: N = k(k-1)/2.
\end{equation} 
The estimator in \eqref{storey} involves a tuning parameter, $\lambda$. Choosing $\lambda$ requires balancing bias and variance for the estimator $\hat{\pi}_0(\lambda)$. According to \cite{storey2003statistical}, for well-behaved p-values, the bias tends to decrease as $\lambda$ increases, reaching its minimum as $\lambda$ approaches 1. Consequently, \cite{storey2003statistical} proposed the following method for selecting the tuning parameter $\lambda$. 
\vspace{4 pt}
\setlength{\textfloatsep}{12pt}
\setlength{\floatsep}{7pt}
\setlength{\intextsep}{7pt}

\begin{algorithm}[H]
\renewcommand{\thealgorithm}{} 
\floatname{algorithm}{}        
\caption{Estimation of $\pi_0$ based on smoothing splines}
\begin{algorithmic}[1]
    \STATE Fix a set of $\lambda$ values, denoted by $\Lambda$ (e.g., $\Lambda = \{0, 0.01, 0.02, \dots, 0.95\}$).
    \STATE For each $\lambda \in \Lambda$, compute $\hat{\pi}_0(\lambda)$ as in \eqref{storey}.
    \STATE Fit a cubic spline $\hat{f}$ to the values $\hat{\pi}_0(\lambda)$.
    \STATE Obtain the final estimate:
    \[
    \hat{\pi}_0 = \min \{ \hat{f}(1), 1 \}.
    \]
\end{algorithmic}
\end{algorithm}
\noindent
In addition to the smoothing-based technique introduced by \cite{storey2003statistical}, a bootstrap-based method for selecting the optimal value of \( \lambda \) was proposed in \cite{storey2004strong}. This method builds on the earlier work of \cite{storey2002direct}. The proposed automatic choice of \( \lambda \) aims to estimate the value that minimizes the mean squared error (MSE), balancing bias and variance. Specifically, it seeks to minimize \( E \left[ (\hat{\pi}_0(\lambda) - \pi_0)^2 \right] \). The expectation is taken with respect to the distribution of $\mathbf{X}_1, \ldots, \mathbf{X}_n \overset{\text{i.i.d.}}{\sim} N(\boldsymbol{\mu}, \boldsymbol{\Sigma})$, where $\pi_0$ denotes the proportion of zero off-diagonal entries in the precision matrix. This convention applies to all subsequent expectation operators throughout the manuscript. The procedure for this method is summarized below.
\setlength{\textfloatsep}{5pt}
\setlength{\floatsep}{5pt}
\setlength{\intextsep}{5pt}

\begin{algorithm}[H]
\renewcommand{\thealgorithm}{} 
\floatname{algorithm}{}        
\caption{Bootstrap-based selection of the tuning parameter $\lambda$ and estimation of $\pi_0$}
\begin{algorithmic}[1]
    \STATE For each $\lambda \in \Lambda$, compute $\hat{\pi}_0(\lambda)$ as in \eqref{storey}.
    \vspace{0.05 in}
    \STATE Generate $B$ bootstrap samples from the p-values. For each $b = 1, \dots, B$, compute the estimators 
    \(\{\hat{\pi}_0^{*b}(\lambda)\}_{\lambda \in \Lambda}\) based on the $b$-th sample.
    \vspace{0.05 in}
    \STATE Since \(E[\hat{\pi}_0(\lambda)] \geq \pi_0\) for all \(\lambda \in [0,1)\), a plug-in estimator of \(\pi_0\) can be taken as
    \[
    \hat{\pi}_0 = \min_{\lambda' \in \Lambda} \{\hat{\pi}_0(\lambda')\}.
    \]
    \STATE For each $\lambda \in \Lambda$, estimate its respective mean squared error (MSE) as
    \[
    \widehat{\text{MSE}} (\lambda) = \frac{1}{B} \sum_{b=1}^{B} 
    \Big[ \hat{\pi}_0^{*b}(\lambda) - \min_{\lambda' \in \Lambda} \{\hat{\pi}_0(\lambda')\} \Big]^2.
    \]
    \STATE Set the final estimator as
       \[ 
       \hat{\pi}_0
=
\min\left\{1,\hat{\pi}_0(\hat{\lambda})\right\},  \quad \text{ where } \: \: 
\hat{\lambda}
\in
\operatorname*{arg\,min}_{\lambda\in\Lambda}
\widehat{\mathrm{MSE}}(\lambda).
\]
\end{algorithmic}
\end{algorithm}
\noindent
The estimator based on \eqref{storey} provides a conservative estimate of
$\pi_0$, which is an important property when the estimate is used in
data-adaptive multiple testing procedures, since underestimation of $\pi_0$ may
lead to a liberal procedure, in the sense that a multiple-testing error rate,
such as the FDR, may exceed its nominal level. In our setting, the vector of $p$-values has length $N = k(k-1)/2$. The following section outlines conditions under which the ECDF of the $p$-values converges to the average of their corresponding cumulative distribution functions (CDFs), thereby making the estimator based on \eqref{storey} an appropriate and reliable choice for our problem.

\section{Theoretical results}

We first state the conditions under which the ECDF of the
\(N=k(k-1)/2\) p-values converges to the average of their
corresponding CDFs, as
shown in Theorem~\ref{th1}. We then discuss a few examples of
covariance matrices that are particularly relevant for genetic
association studies and satisfy the weak dependency conditions of
Theorem~\ref{th1}.
\begin{theorem}
\label{th1}
Consider the test statistics in \eqref{TS} and the corresponding two-sided p-values $(p_{ij})_{1 \le i < j \le k}$ as defined in \eqref{p-values}.
Suppose $\widehat{F}_N(\cdot)$ is the ECDF of the p-values and 
\[
\bar{F}_N(x) = \frac{1}{N} \sum_{1 \le i < j \le k} F_{ij}(x),
\]
the average cdf of the p-values, where $F_{ij}(x) = \Pr(p_{ij} \le x)$. \\[0.05 in] 
Then, under Condition~\eqref{C1} and if $\log k=o(\sqrt n)$, we have for every $x\in[0,1]$,
\begin{align*}
\text{If } \sum_{i<j} |\omega_{ij}| = o(k^2), 
\quad \text{then } \widehat{F}_N(x)-\bar{F}_N(x)\rightarrow 0 \: \text{ in probability} \: \: \: 
\text{as} \: \min\{n,k\}\to\infty. 
\end{align*} 
Additionally, suppose that
\begingroup
\renewcommand{\theequation}{C2}
\renewcommand{\theHequation}{C2}
\begin{equation}\label{C2}
\frac{1}{N} \sum_{1\le i<j\le k}
\min\left\{
1,\,
\sqrt{\log k}\,
\frac{|\omega_{ij}|}{\sqrt{\omega_{ii}\omega_{jj}}}
\right\}
\longrightarrow0
\qquad\text{as } \min\{n,k\}\to\infty.
\end{equation}
\endgroup
\addtocounter{equation}{-1}
Then 
\(
\|\widehat{F}_N-\bar{F}_N\|_\infty
=
\sup_{x\in\mathbb{R}} |\widehat{F}_N(x)-\bar{F}_N(x)|
\rightarrow 0
\)
in probability as $\min\{n,k\}\to\infty$.
\end{theorem}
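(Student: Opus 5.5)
The plan is a second-moment (Chebyshev) argument for fixed $x$, followed by a Glivenko--Cantelli type upgrade to uniform convergence. Fix $x\in[0,1]$. Since $\mathrm{E}\,\widehat F_N(x)=\bar F_N(x)$, it suffices to show $\mathrm{Var}(\widehat F_N(x))\to 0$. We have
\[
\mathrm{Var}\big(\widehat F_N(x)\big)=\frac{1}{N^2}\sum_{(i,j)}\sum_{(l,m)}\mathrm{Cov}\big(I\{p_{ij}\le x\},I\{p_{lm}\le x\}\big).
\]
Pairs sharing an index number only $O(k^3)=o(N^2)$, and each covariance is bounded by $1$, so they contribute $o(1)$. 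For disjoint index pairs $\{i,j\}\cap\{l,m\}=\emptyset$ I would use the approximation underlying Proposition~3.1 and Theorem~3.1 of \cite{liu2013gaussian}. There, $T_{ij}$ equals, up to a uniformly negligible remainder, the standardized sample covariance $\sqrt{n}\,(\hat\sigma_{ij,\varepsilon}-\sigma_{ij,\varepsilon})/\sqrt{\sigma_{ii,\varepsilon}\sigma_{jj,\varepsilon}}$ plus a centering term. The remainder is $o_P(1/\sqrt{\log k})$ uniformly in $(i,j)$, and this is where $\log k=o(\sqrt n)$ enters. Replacing $T_{ij}$ by this leading term $\tilde T_{ij}$ changes $I\{p_{ij}\le x\}$ only on events of vanishing probability, uniformly, because the limit law is continuous. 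So it suffices to bound the covariances of the indicators built from $\tilde T_{ij}$ and $\tilde T_{lm}$.

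\textbf{Key covariance bound (main obstacle).} The pair $(\tilde T_{ij},\tilde T_{lm})$ is a normalized sum of $n$ i.i.d.\ products $\varepsilon_{\cdot i}\varepsilon_{\cdot j}$ and $\varepsilon_{\cdot l}\varepsilon_{\cdot m}$. Their correlation is
\[
\frac{\sigma_{il,\varepsilon}\sigma_{jm,\varepsilon}+\sigma_{im,\varepsilon}\sigma_{jl,\varepsilon}}{\text{norm}},
\]
and since $\sigma_{ij,\varepsilon}=\omega_{ij}/(\omega_{ii}\omega_{jj})$, Condition~\eqref{C1} bounds it by $C(|\omega_{il}\omega_{jm}|+|\omega_{im}\omega_{jl}|)$. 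By a bivariate CLT with Berry--Esseen error and the standard bound for bivariate normals, $|\mathrm{Cov}(I\{|Z_1|\ge t\},I\{|Z_2|\ge t\})|\le C|\rho|$, each covariance is at most $C(|\omega_{il}\omega_{jm}|+|\omega_{im}\omega_{jl}|)+o(1)$ uniformly. Summing over disjoint quadruples gives $\sum|\omega_{il}||\omega_{jm}|\le(\sum_{a,b}|\omega_{ab}|)^2$. With bounded diagonal, this is $(o(k^2)+O(k))^2=o(k^4)=o(N^2)$. The delicate part is making the CLT and remainder approximations uniform over all $O(k^4)$ quadruples, so that the $o(1)$ errors do not accumulate. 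I would try to handle this with the moderate-deviation and uniform bounds of \cite{liu2013gaussian}, which hold uniformly in $i,j$. Chebyshev's inequality then gives the pointwise claim.

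\textbf{Uniform convergence.} Both $\widehat F_N$ and $\bar F_N$ are monotone. Once the pointwise result holds on a finite grid $0=x_0<\dots<x_M=1$, a standard Glivenko--Cantelli argument gives
\[
\|\widehat F_N-\bar F_N\|_\infty\le \max_r|\widehat F_N(x_r)-\bar F_N(x_r)|+\max_r\big(\bar F_N(x_{r+1})-\bar F_N(x_r)\big).
\]
It therefore suffices that $\bar F_N$ be asymptotically equicontinuous. This is the role of \eqref{C2}. For a null pair, $F_{ij}(x)\to x$ uniformly. The same holds for a non-null pair with $\sqrt{\log k}\,|\omega_{ij}|/\sqrt{\omega_{ii}\omega_{jj}}$ small, since the shift $\sqrt n\,\omega_{ij}/\ldots$ in the centering is then negligible relative to the relevant scale. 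All remaining pairs form a fraction bounded by the average in \eqref{C2}, which tends to $0$. So $\bar F_N(x)=x+o(1)$ uniformly, and hence $\bar F_N$ is asymptotically uniformly continuous. Letting the mesh go to zero after $\min\{n,k\}\to\infty$ completes the proof.
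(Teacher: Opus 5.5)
Your pointwise argument is essentially the paper's: Chebyshev, bivariate Berry--Esseen plus Lancaster's maximal-correlation bound, and $\sum_{a\neq b}|\rho_{a,b}|\lesssim (k+\sum_{i<j}|\omega_{ij}|)^2$. Two points need repair.

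\emph{Berry--Esseen near $|\rho_{a,b}|=1$.} The bivariate bound is not uniform when $|\rho_{a,b}|$ is close to $1$. The paper splits at $|\rho_{a,b}|\le\eta$ and otherwise uses $|\mathrm{Cov}|\le 1/4\le|\rho_{a,b}|/(4\eta)$.

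\emph{The replacement step.} For your CLT step to apply, $\widetilde T_{ij}$ must be a normalized i.i.d.\ sum with the \emph{deterministic} centering $\mu_{ij}=\sqrt n\,\theta_{ij}$, where $\theta_{ij}=\omega_{ij}/\sqrt{\omega_{ii}\omega_{jj}}$. With that centering the remainder is not uniformly small. Liu's expansion gives $|T_{ij}-\widetilde T_{ij}|\le R_{n,k}+D_{n,k}|\mu_{ij}|$ with $R_{n,k}=o_P(1)$ and $D_{n,k}=O_P(\sqrt{\log k/n})$, so the error is of order $\sqrt{\log k}\,|\theta_{ij}|$. For fixed $x$ your conclusion still holds. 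However, pairs with $|\mu_{ij}|>M$ need a separate tail argument: both statistics lie beyond the fixed threshold except with probability $O(M^{-2})$. Continuity of the limit law alone does not cover these pairs.

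\textbf{The genuine gap is in the uniform step.} You claim that a non-null pair with small $\sqrt{\log k}\,|\theta_{ij}|$ has $F_{ij}(x)\approx x$. This confuses two scales. The quantity $\sqrt{\log k}\,|\theta_{ij}|$ is the size of the perturbation $D_{n,k}|\mu_{ij}|$, whereas $T_{ij}$ is centred at $-\mu_{ij}=-\sqrt n\,\theta_{ij}$, and $n/\log k\to\infty$.

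For example, take $\boldsymbol{\Omega}=(1-k^{-1})\mathbf{I}_k+k^{-1}\mathbf{1}\mathbf{1}'$ with $n=k^3$. This satisfies \eqref{C1}, \eqref{C2}, $\log k=o(\sqrt n)$ and $\sum_{i<j}|\omega_{ij}|=O(k)$. Yet $\mu_{ij}=\sqrt k\to\infty$ for every pair, so $\bar F_N(x)\to1$ for each fixed $x>0$ while $\bar F_N(0)=0$. Consequently:
\begin{itemize}
\item $\bar F_N(x)=x+o(1)$ is false;
\item $\bar F_N$ need not be asymptotically equicontinuous, since mass can pile up in a shrinking neighbourhood of $0$;
\item no fixed grid works.
\end{itemize}

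What is needed instead is a grid adapted to the average cdf; the paper uses the continuous average cdf of the $\widetilde T_{ij}$. This must be combined with a variance bound and a replacement $T_{ij}\to\widetilde T_{ij}$ that are uniform in the threshold $t$. The tail argument above needs $M$ large relative to $|t|$, so it is not uniform as $x\downarrow0$ (that is, as $t=\Phi^{-1}(1-x/2)\to\infty$).

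This is exactly where \eqref{C2} enters. Pairs with $\sqrt{\log k}\,|\theta_{ij}|\le h/(2L)$ satisfy $|T_{ij}-\widetilde T_{ij}|\le h$ on a high-probability event that does not depend on $t$. The remaining pairs form a fraction at most $(2L/h)\gamma_{n,k}\to0$, where $\gamma_{n,k}$ is the average in \eqref{C2}. Anti-concentration of $\widetilde T_{ij}$ and uniform convergence of the ECDF of the $\widetilde T_{ij}$ then give uniform convergence for the $T_{ij}$. Corollary~\ref{l_pval} transfers this to the p-values.
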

\noindent
The proof of this theorem is discussed in the appendix. Although the conditions in Theorem \ref{th1} are imposed on the precision matrix rather than the covariance matrix, we would like to emphasize that a broad class of covariance matrices also satisfy the weak dependency assumption of Theorem \ref{th1}. The literature on weak dependency among random variables is extensive (see, for example, \cite{MR112166}, \cite{billingsley2017probability}). In Section~2.1 of \cite{10.3150/25-BEJ1858}, several weak-dependence structures
are discussed in the context of genetic association studies. Similar dependence
structures may also arise in other high-dimensional molecular settings,
including proteomic network analyses such as the one considered in
Section~\ref{real_data}. Covariance matrices under such weakly dependent setups satisfy the assumptions of Theorem \ref{th1}. \\[0.05in]
\textbf{Example (Block-dependence structure):} Suppose the covariance matrix admits a block-diagonal structure with $b$ blocks of sizes
$k_1, \ldots, k_b$, where $\sum_{t=1}^{b} k_t = k$.
If $\max_{1 \le t \le b} k_t = o(k)$, then the total number of nonzero entries in the covariance matrix is $o(k^2)$. The block-diagonal structure is preserved under inversion. Hence, the number of nonzero entries in the precision matrix is at most
\[
\sum_{t=1}^{b}k_t^2
\le
k\max_{1\le t\le b}k_t
=
o(k^2).
\]
Condition~\eqref{C1} implies that
\(
|\omega_{ij}|
\le
\sqrt{\omega_{ii}\omega_{jj}}
\le
c_0.
\)
Therefore,
\(
\sum_{1\le i<j\le k}|\omega_{ij}|=o(k^2).
\)
Moreover,
\[
\frac{1}{N}
\sum_{1\le i<j\le k}
\min\left\{
1,\,
\sqrt{\log k}\,
\frac{|\omega_{ij}|}{\sqrt{\omega_{ii}\omega_{jj}}}
\right\}
\le
\frac{1}{N}
\sum_{1\le i<j\le k}I_{\{\omega_{ij}\ne0\}}
=
o(1).
\]
Hence, the assumptions of Theorem \ref{th1} are satisfied in this case. \\[0.03 in]
The precision matrices under a block-dependence structure introduce an interesting decomposition 
of the variables into disjoint connected components based on conditional dependence. 
This means the variables $ \mathbf{X} = (X_1, \dots, X_k)' $ can be partitioned into $l \leq k$ connected components such that 
\[
\omega_{ij} = 0 \quad \text{whenever } i \in C_t, \: j \in C_{t'} \text{ for } t \neq t'.
\] 
Consequently, their joint density factorizes as
\[
f(\mathbf{x}) = \prod_{t=1}^{l} f(\mathbf{x}_{C_t}).
\] 
\textbf{Example (Banded covariance structure):} In this case, we assume that $\mathrm{Cov}(X_i, X_j) = \sigma_{ij} = 0$ for $|i-j| > m$, where $m$ is a banding parameter. We further assume that $m = o(k)$. The banded covariance structure naturally holds under $m$-dependence. The following lemma shows that the weak-dependence assumption required to establish pointwise convergence in Theorem~\ref{th1} is satisfied under a banded covariance structure when $m=o(k)$. Moreover, under the additional condition $m\sqrt{\log k}=o(k)$, Condition~\eqref{C2} is also satisfied.
\begin{lemma}\label{l1}
Let $\boldsymbol{\Sigma} $ be a $k \times k$ banded covariance matrix with banding parameter $m$, and suppose that there exist constants \(0<c_-<c_+<\infty\) such that
\[
c_-\le\lambda_{\min}(\boldsymbol{\Sigma})
\le\lambda_{\max}(\boldsymbol{\Sigma})
\le c_+.
\]
Let $\mathbf{\Omega} = \boldsymbol{\Sigma} ^{-1} = (\omega_{ij})_{k \times k} $ denote its precision matrix. Then, the sum of the absolute values of the entries of $\mathbf{\Omega}$ satisfies
\[
\sum_{i=1}^{k} \sum_{j=1}^{k} |\omega_{ij}| = O(km).
\]
\end{lemma}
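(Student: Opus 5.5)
The plan is to use exponential decay of the entries of the inverse of a well-conditioned banded matrix, in the spirit of Demko--Moss--Smith, and then sum the resulting geometric series row by row. Put $\kappa = c_+/c_-$. Since $\boldsymbol{\Sigma}$ is symmetric positive definite with spectrum in $[c_-,c_+]$, I would approximate $\boldsymbol{\Sigma}^{-1}$ by polynomials in $\boldsymbol{\Sigma}$. The key structural fact is that $\boldsymbol{\Sigma}^{p}$ is banded with bandwidth $pm$. Indeed, $(\boldsymbol{\Sigma}^p)_{ij}=0$ whenever $|i-j|>pm$, by induction on $p$: a product of matrices with bandwidths $a$ and $b$ has bandwidth $a+b$. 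Hence, for any polynomial $P_p$ of degree at most $p$, the matrix $P_p(\boldsymbol{\Sigma})$ vanishes outside the band $|i-j|\le pm$.

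Next I would invoke the Chebyshev approximation bound for $f(t)=1/t$ on $[c_-,c_+]$:
\[
\inf_{\deg P\le p}\ \sup_{t\in[c_-,c_+]}|1/t-P(t)| \le C_0\, q^{p+1},\qquad q=\frac{\sqrt{\kappa}-1}{\sqrt{\kappa}+1}<1,
\]
where $C_0$ depends only on $c_\pm$. By the spectral theorem, $\|\boldsymbol{\Omega}-P_p(\boldsymbol{\Sigma})\|_2\le C_0q^{p+1}$, so every entry of the difference is bounded by the same quantity. Fix $i\neq j$ and let $p$ be the largest integer with $pm<|i-j|$, that is $p=\lceil |i-j|/m\rceil-1$. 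Then $P_p(\boldsymbol{\Sigma})_{ij}=0$, and therefore
\[
|\omega_{ij}|\le C_0\, q^{\lceil |i-j|/m\rceil}\le C_0\, q^{|i-j|/m}.
\]
The diagonal entries are handled directly: $|\omega_{ii}|\le\lambda_{\max}(\boldsymbol{\Omega})\le 1/c_-$. This constants-tracking step is the only place that needs care. I expect it to be the main obstacle only insofar as the uniformity of $C_0$ and $q$ in $k$ and $m$ must be checked, which holds because both depend only on $c_\pm$.

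Finally, I would sum row by row. For fixed $i$,
\[
\sum_{j}|\omega_{ij}|\le \frac1{c_-}+2C_0\sum_{d\ge1}q^{\lceil d/m\rceil}.
\]
Grouping the $d$ into blocks $\{(s-1)m+1,\dots,sm\}$, on each of which $\lceil d/m\rceil=s$, gives
\[
\sum_{d\ge1}q^{\lceil d/m\rceil}=m\sum_{s\ge1}q^s=\frac{mq}{1-q}.
\]
Each row sum is therefore $O(m)$ uniformly in $i$ (using $m\ge1$), and summing over the $k$ rows yields $\sum_{i,j}|\omega_{ij}|=O(km)$, as claimed. If one prefers to avoid Chebyshev polynomials, an alternative route to the same decay is to rescale, writing $\boldsymbol{\Sigma}=c(\mathbf{I}-\mathbf{R})$ with $c=(c_-+c_+)/2$ and $\|\mathbf{R}\|_2\le(\kappa-1)/(\kappa+1)<1$, and then expand in the Neumann series $\boldsymbol{\Omega}=c^{-1}\sum_{p\ge0}\mathbf{R}^p$. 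Here $\mathbf{R}^p$ has bandwidth $pm$, so $|\omega_{ij}|\le c^{-1}\sum_{p\ge |i-j|/m}\|\mathbf{R}\|_2^p$, which gives the same geometric decay with a slightly worse rate.
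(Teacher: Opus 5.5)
Your proposal is correct and follows essentially the same route as the paper. The paper cites Theorem~2.4 of Demko, Moss and Smith for the decay bound $|\omega_{ij}|\le C\,r^{|i-j|/2m}$, where $r=q^{2}$ once the condition number is bounded by $c_+/c_-$; hence $r^{|i-j|/2m}=q^{|i-j|/m}$, which is exactly your bound, and your Chebyshev-approximation-plus-bandedness argument reconstructs that theorem's proof. The paper then sums the resulting geometric series using $(1-r^{1/2m})^{-1}=O(m)$, of which your row-by-row grouping via $\lceil d/m\rceil$ is a more explicit version, and your Neumann-series alternative gives a fully self-contained proof of the same decay with a slightly worse rate.
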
 

\noindent
\textbf{ Proof :} By Theorem 2.4 of \cite{demko1984decay}, there exist constants \(C>0\) and \(0<r<1\), independent of \(k\), such that
\[
|\omega_{ij}| \le C \, r^{|i-j|/2m}, \quad \forall \; 1 \le i,j \le k,
\]
where
\[
r = \left( \frac{\sqrt{\mathrm{Cond}(\boldsymbol{\Sigma} )} - 1}{\sqrt{\mathrm{Cond}(\boldsymbol{\Sigma} )} + 1} \right)^2,
\]
and
$\mathrm{Cond}(\boldsymbol{\Sigma}) = \lambda_{\max}(\boldsymbol{\Sigma}) / \lambda_{\min}(\boldsymbol{\Sigma})$ is the condition number of $\boldsymbol{\Sigma} $.  \\[0.05 in]
It then follows that
\[
\sum_{i=1}^{k} \sum_{j=1}^{k} |\omega_{ij}| \le 2 C \sum_{t=0}^{k-1} (k-t) r_1^t, \quad \text{where } r_1 = r^{1/2m}.
\]
Since $(1-r_1)^{-1}=O(m)$,
\[
\sum_{t=0}^{k-1} (k-t) r_1^t = O(km),
\] 
and hence 
\[
\sum_{i,j=1}^{k} |\omega_{ij}| = O(km),
\] 
which establishes the result.   \\[0.05 in]
Consequently, if \(m=o(k)\), then
\(
\sum_{1\le i<j\le k}|\omega_{ij}|=o(k^2).
\)
If, in addition, \(m\sqrt{\log k}=o(k)\), then condition \eqref{C2} is also satisfied.
\hfill\(\square\)
\\[0.05 in]
For each $k$, let $I_0(k)=\{(i,j):1\leq i<j\leq k,\ \omega_{ij}=0\}$ and $I_1(k)=\{(i,j):1\leq i<j\leq k,\ \omega_{ij}\neq0\}$ denote the sets of true and false null hypotheses, respectively. Since $N=k(k-1)/2$, the corresponding proportions are $\pi_0=|I_0(k)|/N$ and $\pi_1=|I_1(k)|/N=1-\pi_0$.
\begin{corollary}[]\label{cor1}
Suppose that the assumptions of Theorem~\ref{th1}, including Condition
\eqref{C2}, hold. Then $\hat{\pi}_0(\lambda)$ is asymptotically biased
upwards for every fixed $\lambda\in[0,1)$. If $|I_1(k)|>0$, then
$$
\hat{\pi}_0(\lambda)-
\left\{
\pi_0+\pi_1\frac{\overline F_{1,n,k}(\lambda)}{1-\lambda}
\right\}
\longrightarrow0
$$
in probability as $\min\{n,k\}\to\infty$, where
$$
\overline F_{1,n,k}(\lambda)
=
\frac{1}{|I_1(k)|}
\sum_{(i,j)\in I_1(k)}
\Pr(p_{ij}>\lambda).
$$
Moreover, define
$$
F_{n,k}(\lambda)
=
\frac{1}{N}
\sum_{1\le i<j\le k}
\Pr(p_{ij}\le\lambda),
\qquad 0\le\lambda\le1.
$$
Then there exists a sequence of concave functions
$F^\circ_{n,k}$ on $[0,1]$ such that
$$
\left\|F_{n,k}-F^\circ_{n,k}\right\|_\infty
\longrightarrow0
$$
as $\min\{n,k\}\to\infty$. Consequently, every pointwise limit of
$F_{n,k}$ is concave.
\end{corollary}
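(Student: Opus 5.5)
We plan to obtain all three claims of Corollary~\ref{cor1} from the uniform convergence $\|\widehat F_N-\bar F_N\|_\infty\to0$ in Theorem~\ref{th1}, using Condition~\eqref{C2} together with the uniform asymptotic normality of the null statistics from \cite{liu2013gaussian}.

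\textbf{Step 1: decomposition of $\hat\pi_0(\lambda)$.} Since $W(\lambda)/N=1-\widehat F_N(\lambda)$ up to the atom at $\lambda$, which vanishes for continuous $p$-values, Theorem~\ref{th1} gives
\[
\hat\pi_0(\lambda)-\frac{1-\bar F_N(\lambda)}{1-\lambda}\to0
\]
in probability for fixed $\lambda<1$. Split $1-\bar F_N(\lambda)$ over $I_0(k)$ and $I_1(k)$. On $I_0(k)$, $T_{ij}\to N(0,1)$ uniformly in $(i,j)$, so $\Pr(p_{ij}>\lambda)\to1-\lambda$ uniformly over the nulls. Hence
\[
\frac{1}{N}\sum_{I_0}\Pr(p_{ij}>\lambda)=\pi_0(1-\lambda)+o(1).
\]
On $I_1(k)$ the contribution is exactly $\pi_1\overline F_{1,n,k}(\lambda)$. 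Dividing by $1-\lambda$ yields the displayed limit. Upward bias follows because the extra term $\pi_1\overline F_{1,n,k}(\lambda)/(1-\lambda)$ is nonnegative. Thus $\hat\pi_0(\lambda)\ge\pi_0-o_P(1)$, and the same holds for $E\hat\pi_0(\lambda)$ by boundedness: $\hat\pi_0(\lambda)\le 1/(1-\lambda)$, so dominated convergence applies.

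\textbf{Step 2: approximate concavity of $F_{n,k}$.} Note $F_{n,k}=\bar F_N$. We take
\[
F^\circ_{n,k}(\lambda)=\pi_0\lambda+\frac{1}{N}\sum_{I_1}G_{ij}(\lambda),
\]
where $G_{ij}$ is the CDF of $G(|Z+\delta_{ij}|)$ with $Z\sim N(0,1)$ and $\delta_{ij}$ the asymptotic mean of $T_{ij}$. From the $T_{1,ij}$ expansion, $\delta_{ij}$ is of order $\sqrt n\,\omega_{ij}/\sqrt{\omega_{ii}\omega_{jj}}$. Each $G_{ij}$ is concave: the $p$-value of a two-sided $z$-test under a location shift has a nonincreasing density on $(0,1)$, because the likelihood ratio $\{\phi(t-\delta)+\phi(t+\delta)\}/\{2\phi(t)\}$ is increasing in $|t|$, i.e.\ decreasing in $p$. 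A positive combination of concave functions is concave. The error $\|F_{n,k}-F^\circ_{n,k}\|_\infty$ is bounded by two pieces:
\begin{itemize}
\item the uniform null approximation over $I_0$, which is $o(1)$;
\item the normal-approximation error over $I_1$, weighted by $\pi_1$.
\end{itemize}

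\textbf{Main obstacle.} The second piece is the hard part: asymptotic normality of $T_{ij}$ under alternatives is not uniform. It has variance $1+\omega_{ij}^2/(\omega_{ii}\omega_{jj})$ and carries the bias term $b_{n,ij}$. I would first try to avoid this entirely by showing $\pi_1\to0$ under \eqref{C2}. Then any $o(1)$ perturbation of $\pi_0\lambda$ works, and simply $F^\circ_{n,k}(\lambda)=\lambda$ is concave with $\|F_{n,k}-\lambda\|_\infty\le o(1)+\pi_1$.

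This is not automatic, since \eqref{C2} only controls $\min\{1,\sqrt{\log k}\,|\rho_{ij}|\}$ with $\rho_{ij}=\omega_{ij}/\sqrt{\omega_{ii}\omega_{jj}}$. Edges with tiny $|\rho_{ij}|$ are still counted in $\pi_1$, but for them the shift is small. If $\sqrt{n}\,|\rho_{ij}|$ stays bounded, the concave approximation $G_{ij}$ with the variance inflation $1+\rho_{ij}^2\to1$ remains uniformly valid. So I would split $I_1$ at the threshold $|\rho_{ij}|=1/\sqrt{\log k}$:
\begin{itemize}
\item Pairs above the threshold carry total weight $o(1)$ by \eqref{C2}.
\item Pairs below the threshold are handled through the concave Gaussian-shift approximation, using $\log k=o(\sqrt n)$ to control the bias $b_{n,ij}$ uniformly as in \cite{liu2013gaussian}.
\end{itemize}
Concavity of pointwise limits then follows, because a uniform limit of concave functions is concave.
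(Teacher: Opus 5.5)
Your Step~1 is the paper's argument. Your approximant $\pi_0\lambda+N^{-1}\sum_{I_1(k)}G_{ij}$ has the same form as the paper's $F^\circ_{n,k}$. The paper keeps the variance $1+\theta_{ij}^2$ and proves concavity for $\sigma^2\ge1$ in Lemma~\ref{altd}; your unit-variance choice also works once the split below is done correctly.

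The gap is the threshold in your last step. Your $\rho_{ij}$ is the paper's $\theta_{ij}$, and splitting $I_1(k)$ at $\sqrt{\log k}\,|\rho_{ij}|=1$ does not make the Gaussian-shift approximation valid for the pairs below it. With $\mu_{ij}=\sqrt n\,\rho_{ij}$ and $\widetilde T_{ij}=Z_{ij}-\mu_{ij}$, the only uniform control coming from \cite{liu2013gaussian} is
\[
|T_{ij}-\widetilde T_{ij}|\le R_{n,k}+D_{n,k}|\mu_{ij}|,\qquad R_{n,k}=o_P(1),\quad D_{n,k}=O_P\bigl(\sqrt{\log k/n}\bigr).
\]
So for pairs with $\sqrt{\log k}\,|\rho_{ij}|$ of order one, the bound on the perturbation is only $O_P(1)$, not $o_P(1)$. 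For such pairs $|\mu_{ij}|$ can be as large as $\sqrt{n/\log k}\to\infty$. An $O(1)$ shift then changes $\Pr(|T_{ij}|\ge z_\lambda)$ by an amount of order one when $z_\lambda$ is near $|\mu_{ij}|$, so the sup-norm error is not controlled. The assumption $\log k=o(\sqrt n)$ does not help, because $D_{n,k}|\mu_{ij}|$ is of order $\sqrt{\log k}\,|\rho_{ij}|$ regardless. Your remark about bounded $\sqrt n\,|\rho_{ij}|$ covers only part of the set below the threshold.

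The missing observation is that \eqref{C2} controls the weight above every threshold, not only at level one. For $0<\eta\le1$,
\[
\frac1N\,\#\bigl\{(i,j):\sqrt{\log k}\,|\rho_{ij}|>\eta\bigr\}\le\frac{\gamma_{n,k}}{\eta},\qquad \gamma_{n,k}=\frac1N\sum_{1\le i<j\le k}\min\bigl\{1,\sqrt{\log k}\,|\rho_{ij}|\bigr\}\to0 .
\]
So split instead at $\sqrt{\log k}\,|\rho_{ij}|\le\eta_{n,k}$, with $\eta_{n,k}\downarrow0$ and $\gamma_{n,k}/\eta_{n,k}\to0$; for instance $\eta_{n,k}=\gamma_{n,k}^{1/2}$ when $\gamma_{n,k}>0$.

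\begin{itemize}
\item \textbf{Below the threshold:} $D_{n,k}|\mu_{ij}|\le D_{n,k}\sqrt{n/\log k}\,\eta_{n,k}=o_P(1)$ uniformly. The Berry--Esseen bound for $\widetilde T_{ij}$ plus the anti-concentration estimate \eqref{eq:anti-concentration} then give a uniformly vanishing CDF error. Since $\rho_{ij}^2\le1/\log k$ there, your unit-variance choice is harmless.
\item \textbf{Above the threshold:} these pairs have total weight $o(1)$, and each contributes error at most one.
\end{itemize}

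This is exactly the device in the paper's proof of Corollary~\ref{l_altcdf}. There the threshold is $h/(2L)$, and limits are taken first in $\min\{n,k\}$, then in $L$, then in $h$.
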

\textbf{Remarks} 
\begin{enumerate}[label=(\arabic*)]
\item Near $\lambda = 1$, the Schweder-Spjøtvall estimator exhibits high variance. It is therefore common to search for an optimal $\lambda$ on a grid that does not include $1$. In R, the \texttt{smoother} method of the \texttt{qvalue} package computes the final estimator at $0.95$ by default. Hence, it is advisable to search for an optimal $\lambda$ that is bounded away from $1$. 

\item Under the concavity of the limiting alternative distribution, if we assume the existence of some $\lambda_0 < 1$ such that the alternative distribution is supported on $[0,\lambda_0]$, then the Schweder-Spjøtvall estimator evaluated at $\lambda_0$, i.e., $\hat{\pi}_0(\lambda_0)$, is a consistent estimator of $\pi_0$. 

\item  Since the average p-value distribution is asymptotically
approximated by a concave function, the Grenander estimator of \cite{langaas2005estimating}, based on the assumption of a decreasing density, can also be used to estimate \(\pi_0\). 
\end{enumerate} 

\section{Simulation studies}
We generated datasets with \( n = 200 \) observations and \( k = 100, 200, 500, 1000 \) features, for various choices of the covariance matrix \( \mathbf{ \Sigma} \). Specifically, we considered three different structures for \( \mathbf{ \Sigma} \): a block-diagonal matrix, a band graph, and an Erd\H{o}s–R\'enyi random graph, following the setup in \cite{liu2013gaussian}. \\[0.05 in]
\textbf{ Block diagonal $\Sigma $} : The choice of a block-diagonal covariance matrix is of interest because it preserves sparsity after inversion. In particular, we simulated data under a block-diagonal structure for \(\boldsymbol{\Sigma}  \), consisting of \( b \) blocks of size \( s \) (so that \( bs = k \)). Within this block-diagonal structure, the proportion of edges is given by \(
\pi_1 = (s-1)/(k-1)
\). Two types of within-block correlation structures were considered: (i) \textit{autoregressive of order 1 (AR(1))}, and (ii) \textit{equicorrelated}. 
For application of the GFC procedure, we considered two scenarios: the Lasso estimator (\(\text{GFC}_L\)) and the scaled Lasso estimator (\(\text{GFC}_{SL}\)). \\[0.05 in]
For running the simulations, we used the \texttt{SILGGM} package in \textit{R} for implementing the GFC procedure, and the \texttt{qvalue} package for evaluating the performance of Storey's estimator with the `Bootstrap' and `Smoother' methods. \\[0.05 in]
When each block of the covariance matrix follows an AR(1) covariance structure with intra-block autocorrelation parameter $0.5$, the resulting ECDF closely aligns with the uniform CDF, corresponding to the $45^{\circ}$ line through the origin. This behavior reflects the high degree of sparsity induced by the block-diagonal structure. Table~\ref{tab:AR(1)_combined_abbrev} displays the average estimated values based on 100 replications. Overall, the estimates obtained using Lasso tend to be slightly higher than those from Scaled Lasso, and in both cases, they are very close to $1$.  \\[0.03 in]
\begin{table}[t]
\centering
\caption{Simulation results under the AR(1) block covariance structure}
\label{tab:AR(1)_combined_abbrev}
\small
\setlength{\tabcolsep}{4pt} 

\begin{tabular}{c|c|cc|cc|cc|cc}
\hline
\multirow{2}{*}{$\pi_0$} &
\multirow{2}{*}{Method} &
\multicolumn{2}{c|}{$k=100$} &
\multicolumn{2}{c|}{$k=200$} &
\multicolumn{2}{c|}{$k=500$} &
\multicolumn{2}{c}{$k=1000$} \\
\cline{3-10}
& & Smoother & Bootstrap 
  & Smoother & Bootstrap 
  & Smoother & Bootstrap 
  & Smoother & Bootstrap \\
\hline

\multirow{2}{*}{0.80}
& GFC$_L$    & 0.97 & 0.96 & 0.99 & 0.98 & 0.98 & 0.98 & 0.99 & 0.99 \\
& GFC$_{SL}$ & 0.96 & 0.95 & 0.99 & 0.98 & 0.98 & 0.98 & 0.99 & 0.99 \\[2pt]

\multirow{2}{*}{0.90}
& GFC$_L$    & 0.98 & 0.97 & 0.98 & 0.98 & 0.99 & 0.99 & 0.99 & 0.99 \\
& GFC$_{SL}$ & 0.98 & 0.97 & 0.98 & 0.98 & 0.99 & 0.99 & 0.99 & 0.99 \\[2pt]

\multirow{2}{*}{0.95}
& GFC$_L$    & 0.95 & 0.96 & 0.98 & 0.98 & 0.99 & 0.99 & 0.99 & 0.99 \\
& GFC$_{SL}$ & 0.95 & 0.96 & 0.98 & 0.98 & 0.99 & 0.99 & 0.99 & 0.99 \\

\hline
\end{tabular}

\vspace{1mm}
\footnotesize
\emph{Note:} GFC$_L$ represents the GFC procedure with Lasso-based optimization, and GFC$_{SL}$ represents the GFC procedure with scaled Lasso.

\end{table}
\noindent
In Figure~\ref{ecdf_bd_2}, we show the ECDF of the p-values for data generated from a block-diagonal covariance matrix $\boldsymbol{\Sigma} $, where each block exhibits equicorrelation with correlation $0.5$. Such equicorrelated blocks introduce stronger dependence within the model, resulting in concave p-value distributions in both scenarios. 
\begin{figure}[t]
\centering
\subfloat[ECDF of p-values based on $\mathrm{GFC}_L$ (GFC procedure with Lasso-based optimization).]{%
\resizebox*{7.5cm}{!}{\includegraphics{ 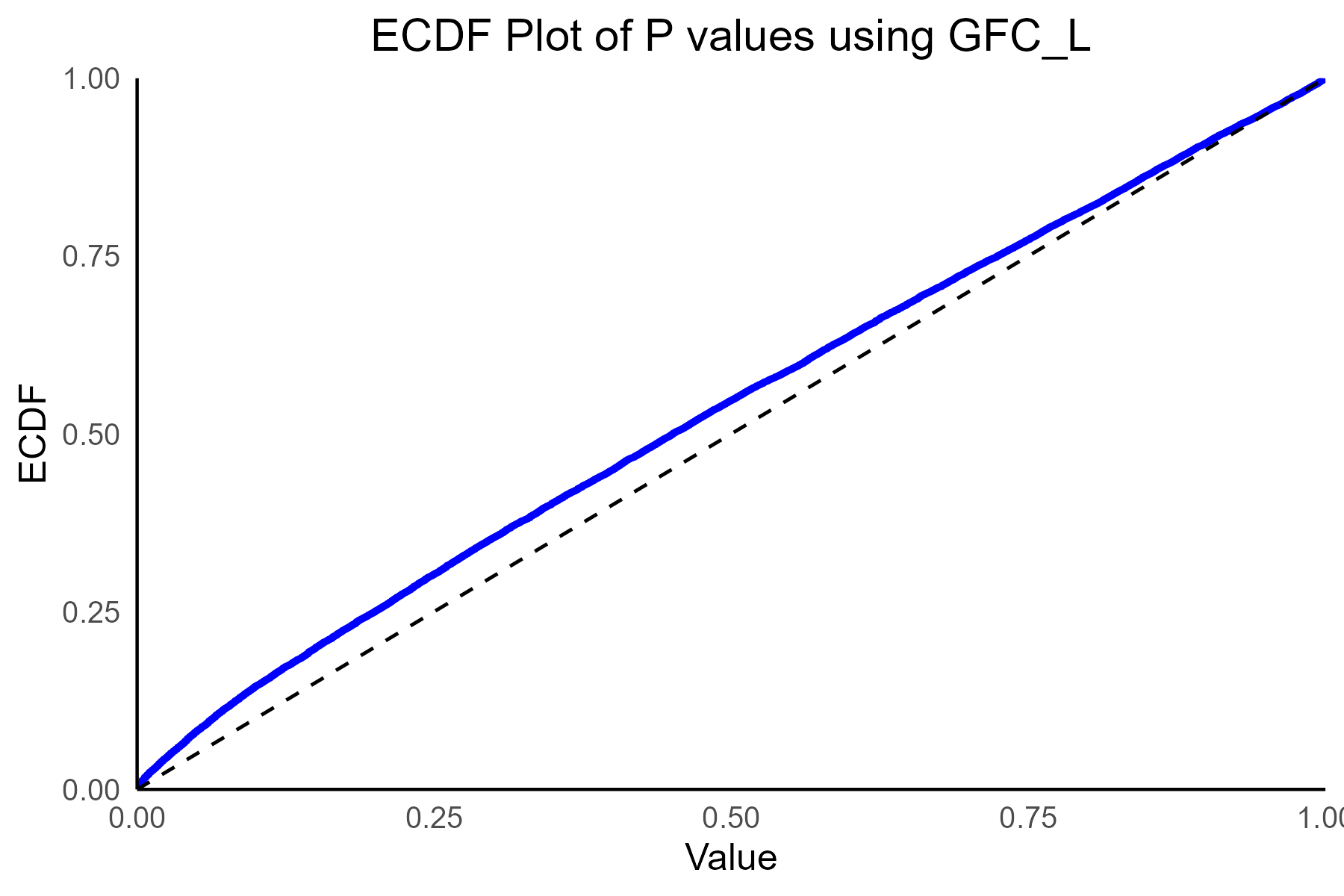 }}}\hspace{5pt}
\subfloat[ ECDF of p-values based on $\mathrm{GFC}_{SL}$ (GFC procedure with scaled Lasso).]{%
\resizebox*{7.5cm}{!}{\includegraphics{ 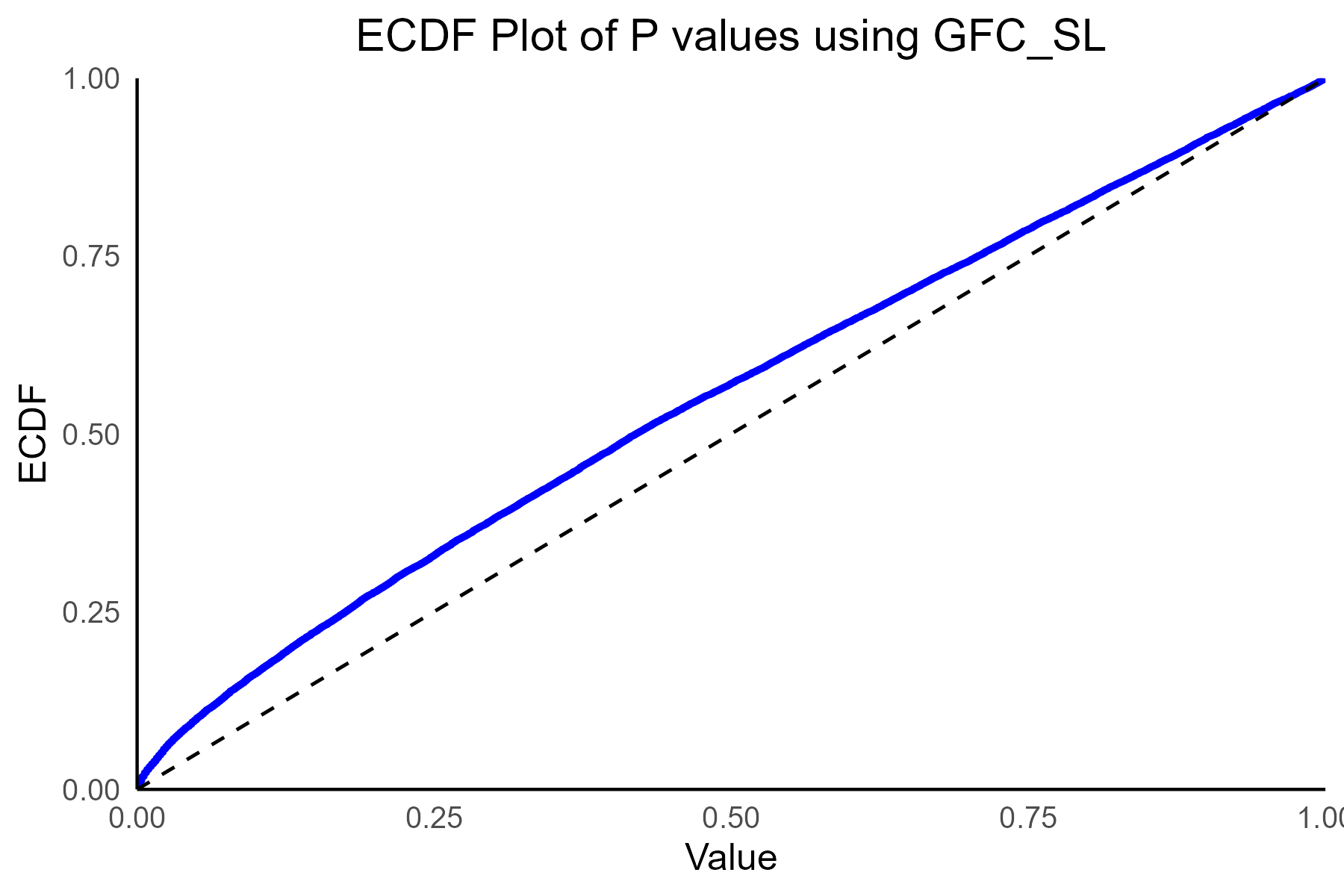 }}}
\caption{ECDFs of the $p$-values under a block-diagonal covariance structure with equicorrelated blocks $(\rho=0.5)$ $(n=200,\ k=500)$.}
\label{ecdf_bd_2}
\end{figure}
\noindent
Table~\ref{tab:equicorr_combined_abbrev} reports the average estimated values of $\pi_0$. Consistent with previous observations, Lasso-based estimates tend to be slightly higher than those obtained using Scaled Lasso. Moreover, both the ``smoother'' and ``bootstrap'' approaches continue to produce conservative estimates of $\pi_0$ in this setting. \\[0.03 in]
\begin{table}[t]
\centering
\caption{Simulation results under the equicorrelated ($\rho=0.5$) block covariance structure}
\label{tab:equicorr_combined_abbrev}
\small
\setlength{\tabcolsep}{4pt} 

\begin{tabular}{c|c|cc|cc|cc|cc}
\hline
\multirow{2}{*}{$\pi_0$} &
\multirow{2}{*}{Method} &
\multicolumn{2}{c|}{$k=100$} &
\multicolumn{2}{c|}{$k=200$} &
\multicolumn{2}{c|}{$k=500$} &
\multicolumn{2}{c}{$k=1000$} \\
\cline{3-10}
& & Smoother & Bootstrap 
  & Smoother & Bootstrap 
  & Smoother & Bootstrap 
  & Smoother & Bootstrap \\
\hline

\multirow{2}{*}{0.80} 
& GFC$_L$    & 0.93 & 0.93 & 0.97 & 0.96 & 0.98 & 0.98 & 0.98 & 0.98 \\
& GFC$_{SL}$ & 0.91 & 0.91 & 0.94 & 0.94 & 0.95 & 0.95 & 0.94 & 0.94 \\[2pt]

\multirow{2}{*}{0.90} 
& GFC$_L$    & 0.93 & 0.94 & 0.96 & 0.96 & 0.98 & 0.98 & 0.98 & 0.98 \\
& GFC$_{SL}$ & 0.93 & 0.91 & 0.92 & 0.93 & 0.94 & 0.94 & 0.94 & 0.94 \\[2pt]

\multirow{2}{*}{0.95} 
& GFC$_L$    & 0.94 & 0.93 & 0.97 & 0.97 & 0.98 & 0.98 & 0.99 & 0.99 \\
& GFC$_{SL}$ & 0.92 & 0.91 & 0.93 & 0.93 & 0.95 & 0.95 & 0.94 & 0.94 \\

\hline
\end{tabular}

\vspace{1mm}
\footnotesize
\emph{Note:} GFC$_L$ represents the GFC procedure with Lasso-based optimization, and GFC$_{SL}$ represents the GFC procedure with scaled Lasso.
\end{table}
\noindent
\textbf{Band Graph :} We consider the same band graph as in \cite{liu2013gaussian}, where
\(\Omega = (\omega_{ij})\) satisfies
\[
\omega_{ij} =
\begin{cases} 
1, & \text{if } i = j, \\[2mm]
0.6, & \text{if } |i - j| = 1, \\[2mm]
0.3, & \text{if } |i - j| = 2, \\[2mm]
0, & \text{if } |i - j| \geq 3.
\end{cases}
\]
For this band graph, out of the \(k(k-1)\) off-diagonal elements, only
\(2(2k-3)\) elements are nonzero.  
Hence, the proportion of nonzero off-diagonal elements is
\(
\pi_1 = (4k-6)/\{ k(k-1)\}.
\) 
As $k$ increases, the precision matrix becomes increasingly sparse, and the estimated $\hat{\pi}_0$ values for both methods rise accordingly, approaching $1$ as expected. Table \ref{tab:Band_mat} displays the average estimated values based on $100$ replications.  \\[0.05 in]
\begin{table}[t]
\centering
\small
\setlength{\tabcolsep}{4pt} 
\caption{Estimated $\hat{\pi}_0$ for the band graph precision matrix ($\pi_0 = 1- (4k-6)/\{ k(k-1) \}$)}
\label{tab:Band_mat}

\begin{tabular}{c|cc|cc|cc|cc}
\hline
\multirow{2}{*}{Method} &
\multicolumn{2}{c|}{\shortstack{$k=100$\\$\pi_0=0.96$}} &
\multicolumn{2}{c|}{\shortstack{$k=200$\\$\pi_0=0.98$}} &
\multicolumn{2}{c|}{\shortstack{$k=500$\\$\pi_0=0.99$}} &
\multicolumn{2}{c}{\shortstack{$k=1000$\\$\pi_0=1.00$}} \\
\cline{2-9}
& Smoother & Bootstrap & Smoother & Bootstrap & Smoother & Bootstrap & Smoother & Bootstrap \\ 
\hline
GFC$_L$  & 0.95 & 0.94 & 0.97 & 0.97 & 0.98 & 0.98 & 0.99 & 0.99 \\
GFC$_{SL}$ & 0.96 & 0.95 & 0.97 & 0.97 & 0.98 & 0.98 & 0.99 & 0.99 \\[2pt]
\hline
\end{tabular}

\vspace{1mm}
\footnotesize
\emph{Note:} GFC$_L$ represents the GFC procedure with Lasso-based optimization, and GFC$_{SL}$ represents the GFC procedure with scaled Lasso.
\end{table}
\noindent
\textbf{Erd\H{o}s–R\'enyi random graph : } In this case, $\omega_{ij} = u_{ ij} \delta_{ij} $ where $u_{ij} \sim Unif \: [ 0.4 , 0.8 ]$ and $ \delta_{ij} \sim Ber ( q ) $  where $ q = \min \{ 0.05 , 5/k \} $. Table \ref{tab:ERRG} provides the $\hat{\pi}_0$ under different combinations.
\begin{table}[t]
\centering
\small
\setlength{\tabcolsep}{4pt} 
\caption{Estimated $\hat{\pi}_0$ for the Erd\H{o}s–R\'enyi random graph ($\pi_0 = 1 - \min \{ 0.05 , 5/k \} $)}
\label{tab:ERRG}

\begin{tabular}{c|cc|cc|cc|cc}
\hline
\multirow{2}{*}{Method} &
\multicolumn{2}{c|}{\shortstack{$k=100$\\$\pi_0=0.95$}} &
\multicolumn{2}{c|}{\shortstack{$k=200$\\$\pi_0=0.98$}} &
\multicolumn{2}{c|}{\shortstack{$k=500$\\$\pi_0=0.99$}} &
\multicolumn{2}{c}{\shortstack{$k=1000$\\$\pi_0=1.00$}} \\
\cline{2-9}
& Smoother & Bootstrap & Smoother & Bootstrap & Smoother & Bootstrap & Smoother & Bootstrap \\ 
\hline
GFC$_L$   & 0.94 & 0.93 & 0.97 & 0.96 & 0.98 & 0.98 & 0.99 & 0.99 \\
GFC$_{SL}$ & 0.95 & 0.93 & 0.97 & 0.96 & 0.98 & 0.98 & 0.98 & 0.98 \\[2pt]
\hline
\end{tabular}

\vspace{1mm}
\footnotesize
\emph{Note:} GFC$_L$ represents the GFC procedure with Lasso-based optimization, and GFC$_{SL}$ represents the GFC procedure with scaled Lasso.
\end{table}
\noindent
We also accommodate the case where the sparsity of the Erd\H{o}s--R\'enyi random graph 
does not change as $k$ increases. This means that $\delta_{ij} \sim \mathrm{Ber}(q)$ 
for some fixed $0 < q < 1$. We consider $q = 0.2, 0.1,$ and $0.05$ 
(i.e., $\pi_0 = 0.8, 0.9,$ and $0.95$). In this setting, where the value of $\pi_1$ is slightly larger, the ECDF of the $p$-values shows a modest deviation from the uniform $[0,1]$ distribution. Nevertheless, the concave shape of the ECDF suggests that Storey’s estimator, with an appropriately chosen tuning parameter, can still yield a reasonable estimate of $\pi_0$. For illustration, we highlight the ECDF corresponding to $q = 0.2$, $n = 200$, and $k = 500$ in figure \ref{ecdf_ERRG_1}. 
Table~\ref{tab:ERRG_Fixed} presents the estimated $\hat{\pi}_0$ values 
for these different combinations.
\begin{table}[t]
\centering
\small
\setlength{\tabcolsep}{4pt} 
\caption{Estimated $\hat{\pi}_0$ for the Erd\H{o}s–R\'enyi random graph with fixed sparsity}
\label{tab:ERRG_Fixed}

\begin{tabular}{c|c|cc|cc|cc|cc}
\hline
\multirow{2}{*}{$\pi_0$} &
\multirow{2}{*}{Method} &
\multicolumn{2}{c|}{$k=100$} &
\multicolumn{2}{c|}{$k=200$} &
\multicolumn{2}{c|}{$k=500$} &
\multicolumn{2}{c}{$k=1000$} \\
\cline{3-10}
& & Smoother & Bootstrap & Smoother & Bootstrap & Smoother & Bootstrap & Smoother & Bootstrap \\ 
\hline
\multirow{2}{*}{0.80} 
& GFC$_L$  & 0.83 & 0.83 & 0.87 & 0.87 & 0.93 & 0.93 & 0.95 & 0.95 \\
& GFC$_{SL}$ & 0.80 & 0.81 & 0.82 & 0.83 & 0.86 & 0.86 & 0.88 & 0.88 \\[2pt]

\multirow{2}{*}{0.90} 
& GFC$_L$  & 0.90 & 0.90 & 0.90 & 0.91 & 0.94 & 0.94 & 0.96 & 0.96 \\
& GFC$_{SL}$ & 0.89 & 0.89 & 0.89 & 0.89 & 0.89 & 0.89 & 0.89 & 0.89 \\[2pt]

\multirow{2}{*}{0.95} 
& GFC$_L$  & 0.95 & 0.94 & 0.95 & 0.94 & 0.96 & 0.96 & 0.96 & 0.97 \\
& GFC$_{SL}$ & 0.94 & 0.94 & 0.93 & 0.93 & 0.91 & 0.92 & 0.91 & 0.91 \\[2pt]
\hline
\end{tabular}

\vspace{1mm}
\footnotesize
\emph{Note:} GFC$_L$ represents the GFC procedure with Lasso-based optimization, and GFC$_{SL}$ represents the GFC procedure with scaled Lasso.
\end{table}
\begin{figure}[t]
\centering
\subfloat[ECDF of p-values based on $\mathrm{GFC}_L$ (GFC procedure with Lasso-based optimization).]{%
\resizebox*{7.5cm}{!}{\includegraphics{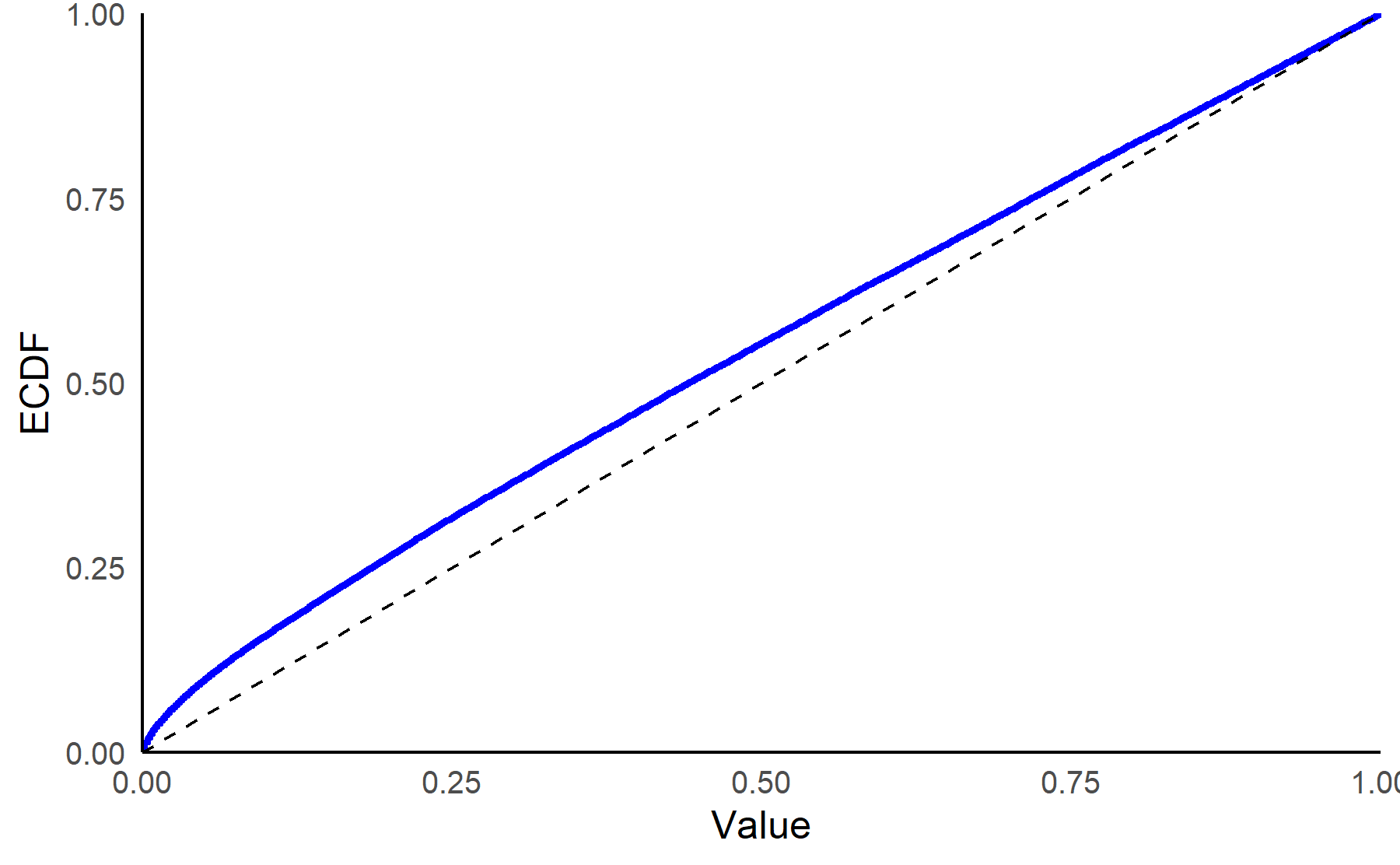 }}}\hspace{5pt}
\subfloat[ECDF of p-values based on $\mathrm{GFC}_{SL}$ (GFC procedure with scaled Lasso).]{%
\resizebox*{7.5cm}{!}{\includegraphics{ 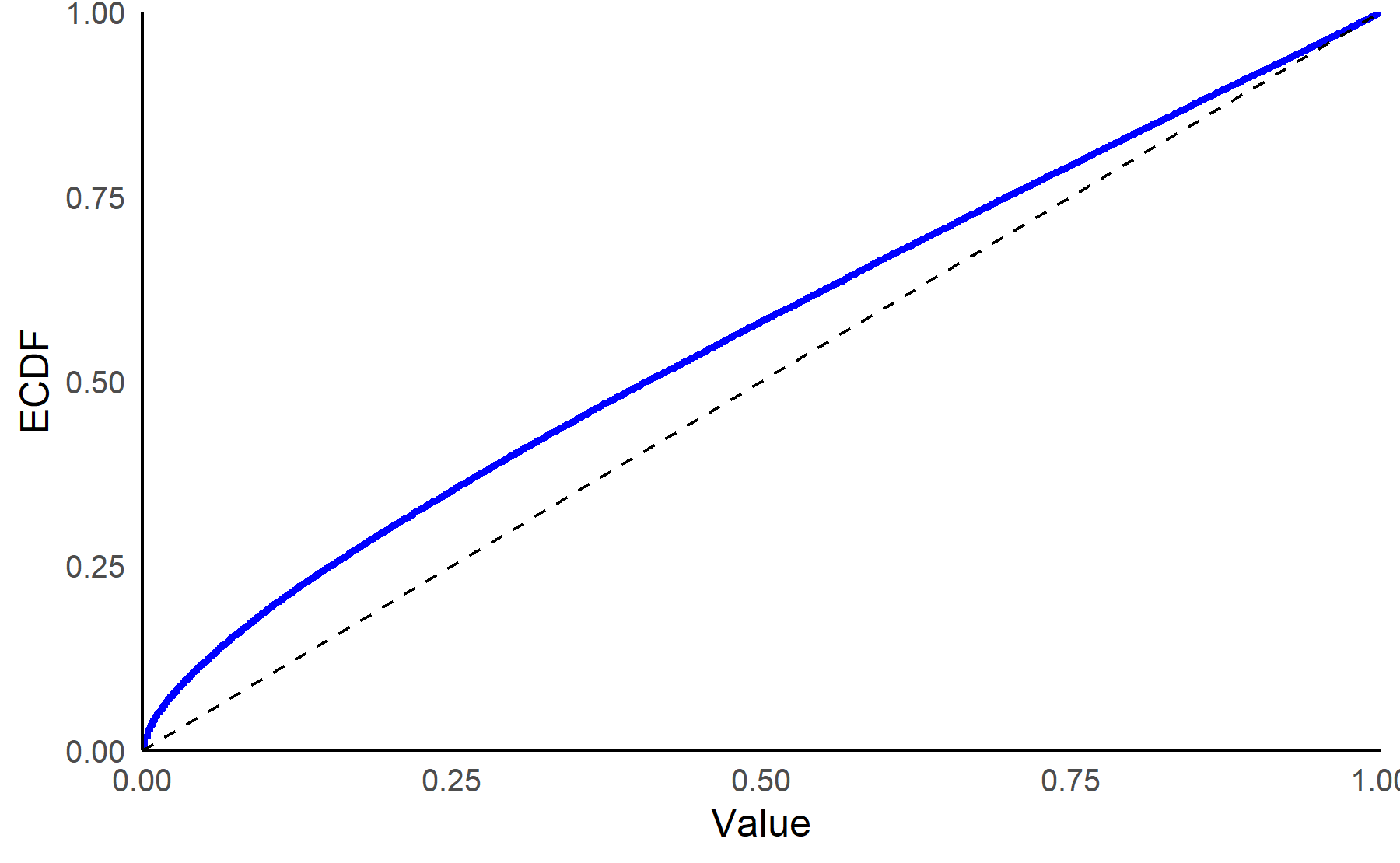 }}}
\caption{ECDF of p-values for an Erd\H{o}s--R\'enyi random graph with fixed sparsity ($q=0.2$, $n=200$, $k=500$).} \label{ecdf_ERRG_1}
\end{figure}

\noindent
In all cases, the general observation is that the estimated $\hat{\pi}_0$ is close to the true value of $\pi_0$. Thus, the GFC procedure, combined with Storey's estimator, provides a reasonable estimate of the graph's complexity. However, for $\pi_0 = 0.95$ and $k = 1000$, this estimator slightly underestimates $\pi_0$. It should be noted that this scenario deviates from our sparsity assumptions of Theorem \ref{th1}. Interestingly, even with this slight violation of the model assumptions, the method still provides a reasonable estimate of the true graph complexity.
{
\section{Real data analysis}\label{real_data}
To illustrate the practical performance of the proposed method, we consider the
breast cancer proteomic dataset from The Cancer Genome Atlas (TCGA) study of
\cite{cancer2012comprehensive}. The dataset contains measurements of $k=171$
cancer-related proteins and phosphoproteins obtained from $n=403$ primary
breast tumor samples. The measurements
were generated using reverse-phase protein arrays (RPPA), a platform that
measures the relative abundance of selected proteins and modified forms of
proteins in the tumor tissue. Thus, each row of the data matrix corresponds to
a primary breast tumor sample from a breast cancer patient, while each column
corresponds to a protein or phosphoprotein measurement. The dataset is publicly
available through the National Cancer Institute's Genomic Data Commons (GDC).
Related TCGA breast cancer RPPA data have previously been analyzed using
graphical Lasso methods to estimate the underlying Gaussian graphical model;
see \cite{lingjaerde2021tailored}. Our objective is different. Rather than
estimating the individual edges, we consider the $N=k(k-1)/2$ hypotheses
concerning the off-diagonal precision-matrix entries and estimate the overall
network complexity from the resulting p-values.
\\[0.05in]
For this dataset, $n=403$ and $k=171$, so that the multiple testing problem
involves $14{,}535$ hypotheses. Table~\ref{tab:TCGA_RPPA} presents the estimates
of the proportion of null hypotheses, $\hat{\pi}_0$, obtained using the
smoother and bootstrap methods for both $\mathrm{GFC}_{L}$ and
$\mathrm{GFC}_{SL}$.
\begin{table}[H]
\centering
\small
\setlength{\tabcolsep}{6pt}
\caption{Estimated sparsity levels for the TCGA breast cancer
RPPA data.}
\label{tab:TCGA_RPPA}
\begin{minipage}{0.72\textwidth}
\centering
\begin{tabular}{|ll|ll|}
\hline
\multicolumn{2}{|c|}{$\mathrm{GFC}_{L}$} &
\multicolumn{2}{c|}{$\mathrm{GFC}_{SL}$} \\ \hline
Smoother & Bootstrap & Smoother & Bootstrap \\ \hline
0.76 & 0.76 & 0.73 & 0.73 \\ \hline
\end{tabular}
\par\smallskip
\raggedright
\footnotesize
\emph{Note:} $\mathrm{GFC}_{L}$ denotes the GFC procedure with Lasso-based optimization, and $\mathrm{GFC}_{SL}$ denotes the GFC procedure with scaled Lasso.
\end{minipage}
\end{table}
\noindent
The estimated proportions of null hypotheses range from $0.73$ to $0.76$,
corresponding to estimated edge proportions between $0.24$ and $0.27$, which
represent the estimated complexity of the proteomic network in primary breast
tumors from breast cancer patients. For
each GFC implementation, the smoother and bootstrap methods produce the same
estimate to two decimal places, indicating that the estimated edge proportion
is not substantially affected by the choice between these two methods.
Therefore, approximately one-quarter of the $14{,}535$ possible protein and
phosphoprotein pairs are estimated to correspond to nonzero off-diagonal
precision-matrix entries. This indicates a substantial conditional-dependence
structure among the measured proteins and phosphoproteins across the primary
breast tumor samples.
\\[0.05 in]
Figure~\ref{fig:P_TCGA_RPPA} presents the ECDFs of
the p-values obtained from the two procedures. Both plots exhibit a concave
pattern, with a relatively large concentration of small p-values. This behavior
is consistent with the presence of a non-negligible proportion of non-null
hypotheses. The broadly similar conclusions obtained from
$\mathrm{GFC}_{L}$ and $\mathrm{GFC}_{SL}$ further indicate that the estimated
network complexity is reasonably stable across the two implementations.
\begin{figure}[h]
\centering
\subfloat[Empirical CDF of p-values based on $\mathrm{GFC}_{L}$ (GFC with Lasso) for the TCGA breast cancer RPPA data.]{%
\resizebox*{7.5cm}{!}{%
\includegraphics{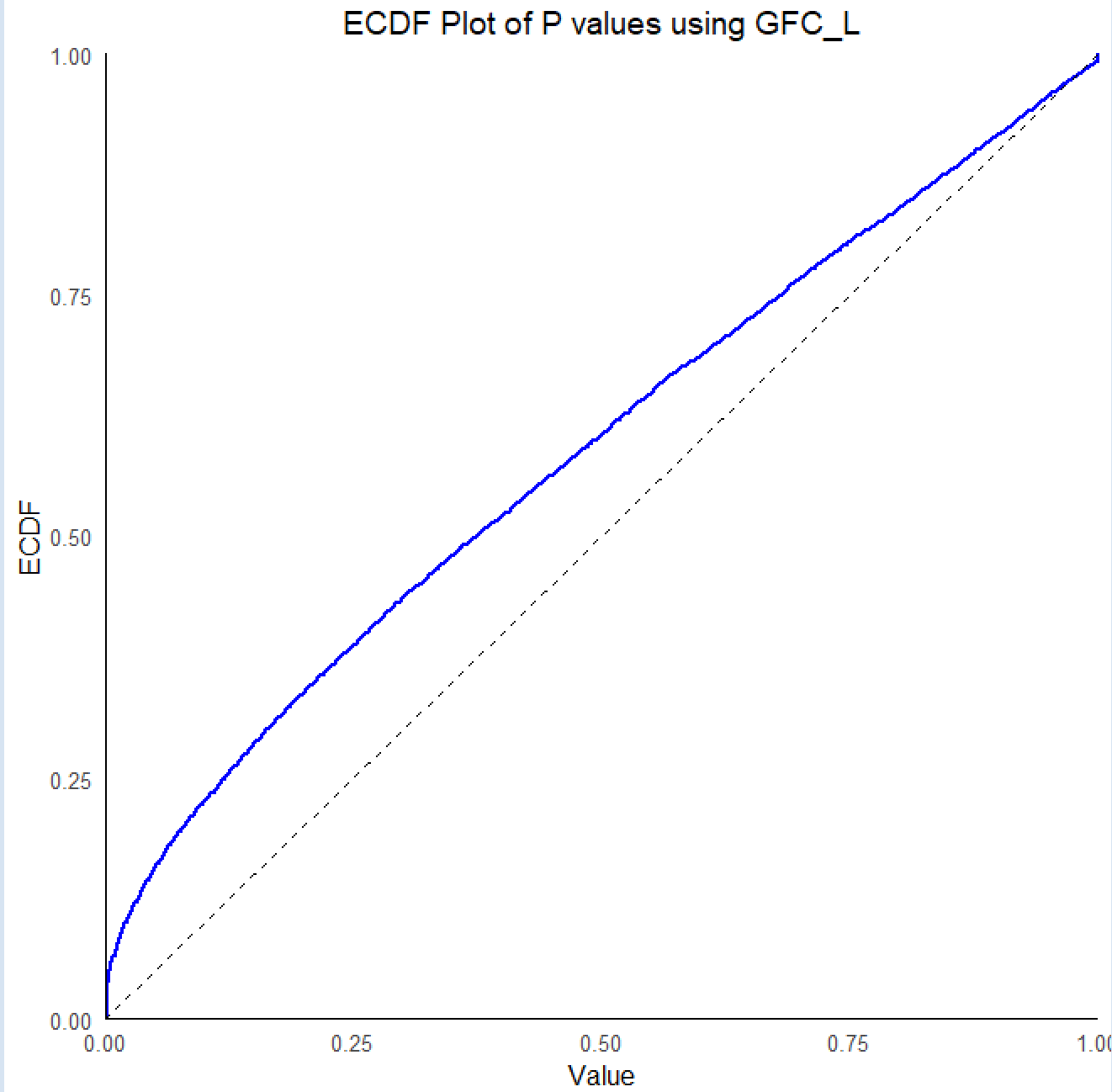}%
}}%
\hspace{5pt}
\subfloat[Empirical CDF of p-values based on $\mathrm{GFC}_{SL}$ (GFC with scaled Lasso) for the TCGA breast cancer RPPA data.]{%
\resizebox*{7.5cm}{!}{%
\includegraphics{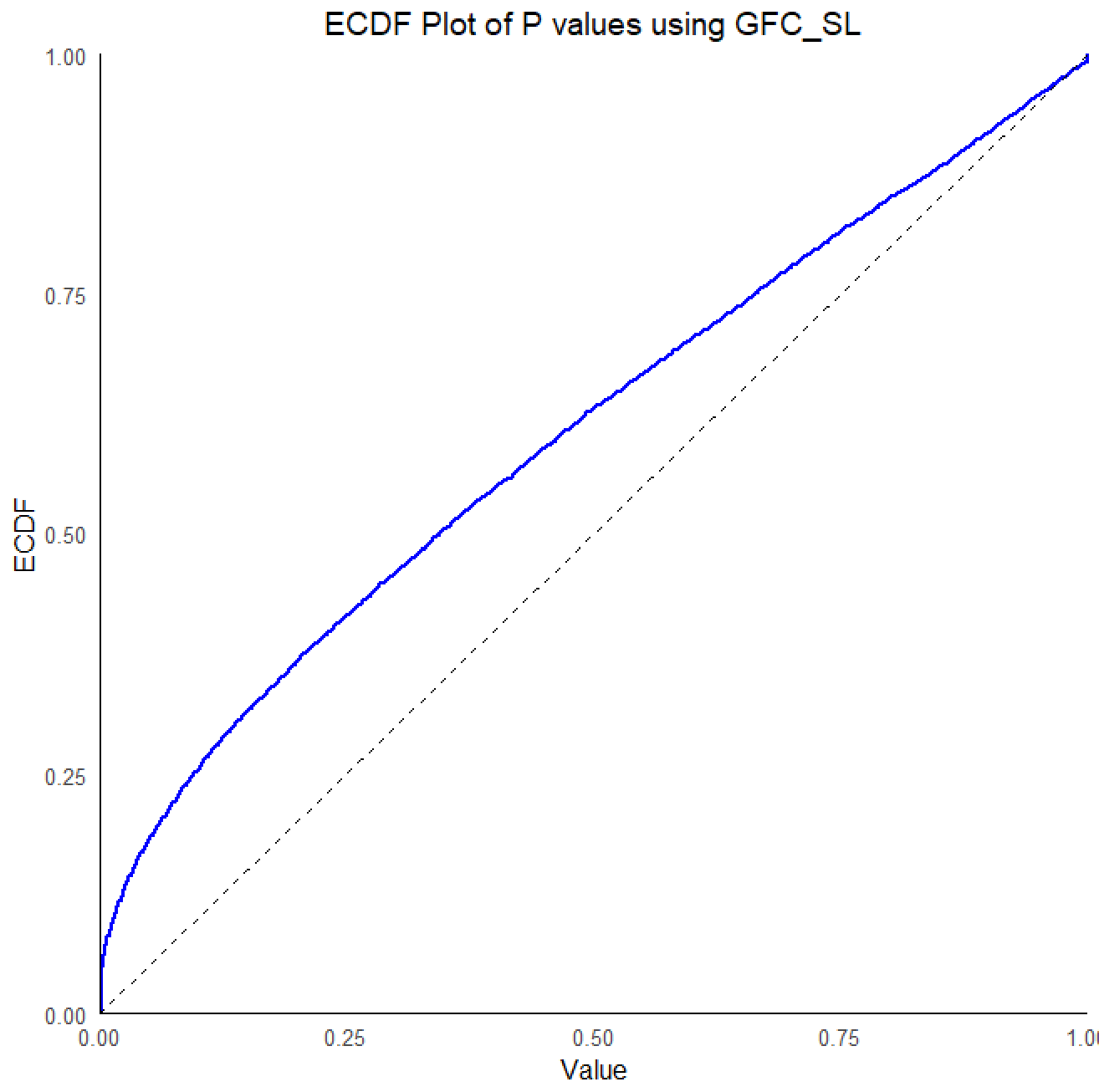}%
}}
\caption{Empirical CDFs of p-values obtained from the
$\mathrm{GFC}_{L}$ and $\mathrm{GFC}_{SL}$ procedures for the TCGA breast
cancer RPPA data of \cite{cancer2012comprehensive}.}
\label{fig:P_TCGA_RPPA}
\end{figure}
}
\section{ Concluding Remarks} 
In this paper, we address the problem of simultaneously testing the entries of a precision matrix using the methodology of \cite{liu2013gaussian}, and we show that the Schweder--Spjøtvall estimator, with the tuning parameter selected via bootstrap or smoothing splines, provides a reliable means of estimating the complexity of the underlying graph. The assumptions of Theorem~\ref{th1} are fairly general, covering a wide range of dependent structures commonly encountered in applications, particularly in gene-based studies. \\[0.05 in]
The theoretical analysis in this paper concerns the ECDF of the GFC-based $p$-values for estimating this global measure of network complexity. Thus, we do not rederive the ratio-convergence arguments used in \cite{liu2013gaussian} for FDP or FDR control at data-driven rejection thresholds. A joint theory connecting global complexity estimation with threshold-based graph recovery is an interesting direction for future work.\\[0.05 in]
Due to its ability to accommodate various weakly dependent structures and its simple formulation based on the ECDF of the $p$-values, the Schweder--Spjøtvall estimator is well-suited for problems with inherent dependencies among $p$-values. The main theoretical result of Theorem~\ref{th1} relies on assumptions regarding the sum of absolute values of the precision matrix entries. Covariance matrices with Toeplitz structure and fast decay rates (e.g., exponential decay) satisfy these assumptions. An interesting direction for future work would be to impose a similar criterion directly on the covariance matrix. Although the asymptotic small order of the sum of absolute covariances alone is not sufficient to guarantee convergence of the ECDF to the average CDF, practical examples and simulation results suggest that this criterion may hold for a broad class of covariance matrices. \\[0.05 in]
Another promising direction is to extend the methodology to copula-based graphical models instead of Gaussian graphical models, as considered in \cite{bauer2012pair}. Previous works such as \cite{dobra2011copula,liu2012high} have studied semiparametric Gaussian copula models, while multi-attribute Gaussian graphical models have been explored in \cite{li2025novel}. Latent variable-based approaches have been investigated in \cite{behrouzi2019detecting,hermes2024copula,6288344}. Moreover, \cite{neumann2021estimating} considered estimation of $\pi_0$ under a general copula model using a combination of the independent component bootstrap of \cite{hall2009using} and the Schweder-Spjøtvall estimator. Extending the estimation of network complexity to more general copula-based models represents an exciting and challenging direction for future research.













\section{Appendix}

We begin by stating a few technical lemmas that are essential
for the proof of Theorem \ref{th1}. \\[0.05 in]
Define
\begin{equation}\label{uij}
U_{ij} \;=\; \frac{1}{\sqrt{n}} \sum_{l=1}^{n}
\left\{ \varepsilon_{li} \varepsilon_{lj} - E[\varepsilon_{li}\varepsilon_{lj}] \right\}.
\end{equation}
\noindent
Observe that
\[
\begin{pmatrix}
 \varepsilon_{li} \\[3pt]
 \varepsilon_{lj}
\end{pmatrix}
\sim
N_2\!\left(
\begin{pmatrix} 0 \\ 0 \end{pmatrix},
\begin{bmatrix}
 \delta_{ii} & \delta_{ij} \\
 \delta_{ij} & \delta_{jj}
\end{bmatrix}
\right),
\]
where
\[
\delta_{ij} = \frac{\omega_{ij}}{\omega_{ii}\,\omega_{jj}},
\qquad
\delta_{ii} = \sigma_{ii}
 - \Sigma_{i,-i}\,\Sigma_{-i,-i}^{-1}\,\Sigma_{-i,i}.
\]

\begin{lemma}\label{l_cov}
For any $(i,j)$ and $(i',j')$ with $i<j$ and $i'<j'$,
\[
\operatorname{Cov}(U_{ij}, U_{i'j'})
=
\delta_{ii'} \delta_{jj'}
\;+\;
\delta_{ij'} \delta_{i'j}.
\]
\end{lemma}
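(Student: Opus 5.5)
The plan is to reduce the covariance of the normalized sums $U_{ij}$ to a single-observation computation, and then apply Isserlis' (Wick's) theorem for fourth moments of jointly Gaussian variables.

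\emph{Reduction to one observation.} The rows $\boldsymbol{\varepsilon}_1,\dots,\boldsymbol{\varepsilon}_n$ are i.i.d., because they are measurable functions of the i.i.d. observations $\mathbf{X}_1,\dots,\mathbf{X}_n$ via the fixed population regressions. Hence, in
\[
\operatorname{Cov}(U_{ij},U_{i'j'}) = \frac{1}{n}\sum_{l=1}^{n}\sum_{l'=1}^{n}\operatorname{Cov}\bigl(\varepsilon_{li}\varepsilon_{lj},\,\varepsilon_{l'i'}\varepsilon_{l'j'}\bigr),
\]
all cross terms with $l\neq l'$ vanish. The $n$ diagonal terms are identical, so the factor $1/n$ cancels and
\[
\operatorname{Cov}(U_{ij},U_{i'j'}) = \operatorname{Cov}(\varepsilon_{1i}\varepsilon_{1j},\,\varepsilon_{1i'}\varepsilon_{1j'}).
\]

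\emph{Gaussian fourth moment.} The vector $\boldsymbol{\varepsilon}_1=(\varepsilon_{11},\dots,\varepsilon_{1k})'$ is a linear transformation of a Gaussian vector. It is therefore jointly Gaussian with mean zero and covariance matrix $\Sigma_\varepsilon=(\delta_{ab})$. Here the diagonal entries $\delta_{aa}$ are the residual variances given above, and the off-diagonal entries are $\delta_{ab}=\omega_{ab}/(\omega_{aa}\omega_{bb})$, as recorded in step (I) of Section~\ref{Desc}. Isserlis' theorem gives
\[
E[\varepsilon_{i}\varepsilon_{j}\varepsilon_{i'}\varepsilon_{j'}] = \delta_{ij}\delta_{i'j'}+\delta_{ii'}\delta_{jj'}+\delta_{ij'}\delta_{ji'}.
\]
Subtracting $E[\varepsilon_i\varepsilon_j]\,E[\varepsilon_{i'}\varepsilon_{j'}]=\delta_{ij}\delta_{i'j'}$ leaves $\delta_{ii'}\delta_{jj'}+\delta_{ij'}\delta_{i'j}$, which is the claimed formula.

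\emph{Main point requiring care.} The only step needing justification is the joint Gaussianity of the full vector $\boldsymbol{\varepsilon}$. A pairwise bivariate normal statement, as displayed before the lemma, is not enough for a fourth moment involving up to four distinct indices. I would settle this by writing $\boldsymbol{\varepsilon}=\mathbf{D}\,\boldsymbol{\Omega}\,(\mathbf{X}-\boldsymbol{\mu})$ with $\mathbf{D}=\operatorname{diag}(\omega_{11}^{-1},\dots,\omega_{kk}^{-1})$. This identity follows from $\boldsymbol{\beta}_i=-\omega_{ii}^{-1}\boldsymbol{\Omega}_{-i,i}$. From it, joint normality is immediate, and the covariance is $\mathbf{D}\boldsymbol{\Omega}\mathbf{D}$, which gives $\delta_{ab}=\omega_{ab}/(\omega_{aa}\omega_{bb})$ for all $a,b$. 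On the diagonal this yields $\delta_{aa}=1/\omega_{aa}$, consistent with the Schur-complement expression for $\delta_{aa}$.

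Degenerate index coincidences, such as $i=i'$ or $j=i'$, require no separate treatment, because Isserlis' formula holds with repeated indices.
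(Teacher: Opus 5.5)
Your proof is correct and takes the same route as the paper, which simply applies Isserlis' theorem to $E[\varepsilon_{li}\varepsilon_{lj}\varepsilon_{li'}\varepsilon_{lj'}]$ and subtracts $\delta_{ij}\delta_{i'j'}$. You also make explicit two steps the paper leaves implicit: the cross terms with $l\neq l'$ vanish because the rows are independent, and the full vector is jointly Gaussian via $\boldsymbol{\varepsilon}=\mathbf{D}\boldsymbol{\Omega}(\mathbf{X}-\boldsymbol{\mu})$; the paper only displays bivariate normality, so this is a worthwhile tightening.
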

\noindent\textbf{Proof :}
By Isserlis's theorem (see \cite{isserlis1918formula}),
\[
\begin{aligned}
E \!\left[
\varepsilon_{li}\varepsilon_{lj}\varepsilon_{li'}\varepsilon_{lj'}
\right]
&=
E [\varepsilon_{li}\varepsilon_{lj}]\,E[\varepsilon_{li'}\varepsilon_{lj'}]
+
E [\varepsilon_{li}\varepsilon_{li'}]\,E[\varepsilon_{lj}\varepsilon_{lj'}]
+
E [\varepsilon_{li}\varepsilon_{lj'}]\,E[\varepsilon_{li'}\varepsilon_{lj}].
\end{aligned}
\]
From this decomposition, the desired expression for
$\operatorname{Cov}(U_{ij}, U_{i'j'})$ follows immediately.
\hfill\(\square\)
\begin{lemma}\label{l_ecdf}
Let \((X,Y)\) be a standard bivariate Gaussian vector with correlation
coefficient \(\rho\). Then, for every \(x,y\in\mathbb{R}\),
\[
\left|
\operatorname{Cov}\Big(
I_{\{X\le x\}},
I_{\{Y\le y\}}
\Big)
\right|
\le
\frac{|\rho|}{4}.
\]
\end{lemma}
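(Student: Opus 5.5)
The natural tool is Plackett's identity (equivalently, the Gaussian interpolation formula). Write $\Phi_2(x,y;\rho)=\Pr(X\le x, Y\le y)$ for the bivariate standard normal cdf with correlation $\rho$, and $\phi_2(x,y;\rho)$ for its density. Then
\[
\operatorname{Cov}\big(I_{\{X\le x\}},I_{\{Y\le y\}}\big)=\Phi_2(x,y;\rho)-\Phi(x)\Phi(y)=\Phi_2(x,y;\rho)-\Phi_2(x,y;0).
\]
Plackett's identity states that $\partial \Phi_2(x,y;r)/\partial r=\phi_2(x,y;r)$ for $|r|<1$. I would either cite it or verify it quickly via the heat-equation property $\partial_r\phi_2=\partial_x\partial_y\phi_2$, integrated over $(-\infty,x]\times(-\infty,y]$. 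Consequently
\[
\operatorname{Cov}\big(I_{\{X\le x\}},I_{\{Y\le y\}}\big)=\int_0^{\rho}\phi_2(x,y;r)\,dr ,
\]
and the integrand is nonnegative, so its sign is that of $\rho$.

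The key step is then a uniform bound on the density:
\[
\phi_2(x,y;r)=\frac{1}{2\pi\sqrt{1-r^2}}\exp\Big\{-\frac{x^2-2rxy+y^2}{2(1-r^2)}\Big\}\le \frac{1}{2\pi\sqrt{1-r^2}},
\]
because the quadratic form in the exponent is nonnegative. Integrating this over $r$ gives
\[
\big|\operatorname{Cov}\big(I_{\{X\le x\}},I_{\{Y\le y\}}\big)\big|\le \frac{1}{2\pi}\int_0^{|\rho|}\frac{dr}{\sqrt{1-r^2}}=\frac{\arcsin|\rho|}{2\pi}.
\]
This bound is attained at $x=y=0$, where Sheppard's formula gives exactly $\arcsin(\rho)/(2\pi)$.

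The remaining step is the elementary inequality $\arcsin|\rho|/(2\pi)\le|\rho|/4$. Since $\arcsin$ is convex on $[0,1]$ with $\arcsin 0=0$ and $\arcsin 1=\pi/2$, convexity gives $\arcsin t\le (\pi/2)\,t$ for $t\in[0,1]$. Hence $\arcsin|\rho|/(2\pi)\le|\rho|/4$, which is the claim.

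The edge cases are handled separately. When $\rho=0$ the covariance is zero. When $|\rho|=1$, the bound $|\rho|/4=1/4$ holds trivially, because the covariance of two indicators is at most $\sqrt{p(1-p)q(1-q)}\le 1/4$. The step I expect to need the most care is justifying Plackett's identity and the interchange of differentiation and integration. If that becomes awkward, I would instead represent $Y=rX+\sqrt{1-r^2}Z$ and differentiate $\Pr(X\le x, Y\le y)$ in $r$ directly, which gives the same density formula.
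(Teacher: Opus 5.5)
Your proof is correct, but it takes a different route from the paper.

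\textbf{The paper's route.} The paper gets the lemma in one line from the Gebelein--Lancaster maximal-correlation inequality, $|\operatorname{Corr}\{f(X),g(Y)\}|\le|\rho|$ for all square-integrable $f,g$. It applies this to the two indicators and then uses $\operatorname{Var}(I_{\{X\le x\}})\le 1/4$.

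\textbf{Your route.} You use Plackett's identity to write the covariance exactly as $\int_0^{\rho}\phi_2(x,y;r)\,dr$. You then bound the density uniformly by $1/(2\pi\sqrt{1-r^2})$ and finish with the chord inequality $\arcsin t\le \pi t/2$, which holds because $\arcsin$ is convex on $[0,1]$.

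\textbf{What your route buys.}
\begin{itemize}
\item It is self-contained: it needs only calculus and dominated convergence, with no appeal to the Mehler/Hermite-expansion machinery behind Lancaster's theorem.
\item It shows that the covariance has the sign of $\rho$.
\item It gives the sharper uniform bound $\arcsin|\rho|/(2\pi)$. By Sheppard's formula this is the exact supremum over $(x,y)$, attained at $x=y=0$. It is strictly smaller than $|\rho|/4$ for $0<|\rho|<1$, since $\arcsin$ is strictly convex on $(0,1)$.
\end{itemize}

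\textbf{What the paper's route buys.}
\begin{itemize}
\item Brevity.
\item It holds for arbitrary square-integrable functions of $X$ and $Y$, not only indicators of half-lines.
\item It directly yields the position-dependent bound $|\rho|\sqrt{\Phi(x)\{1-\Phi(x)\}\Phi(y)\{1-\Phi(y)\}}$, which decays in the tails. Your integral representation could recover such decay only if you kept the exponential factor rather than discarding it.
\end{itemize}
Neither refinement matters for the paper's use of the lemma: the proof of Theorem~\ref{th1} only needs the uniform constant $|\rho|/4$. Your separate treatment of $|\rho|=1$ and your remark on justifying the interchange of differentiation and integration are both adequate.
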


\noindent\textbf{Proof :}
By the maximal-correlation inequality for a bivariate Gaussian vector
\citep{lancaster1957},
\[
\left|
\operatorname{Corr}\{f(X),g(Y)\}
\right|
\le
|\rho|
\]
for all square-integrable functions \(f\) and \(g\). Taking
\[
f(X)=I_{\{X\le x\}},
\qquad
g(Y)=I_{\{Y\le y\}},
\]
we obtain
\[
\left|
\operatorname{Cov}\Big(
I_{\{X\le x\}},
I_{\{Y\le y\}}
\Big)
\right|
\le
|\rho|
\sqrt{
\operatorname{Var}\big(I_{\{X\le x\}}\big)
\operatorname{Var}\big(I_{\{Y\le y\}}\big)
}.
\]
Since the variance of an indicator function is at most \(1/4\), the
result follows. 
\hfill\(\square\)
\begin{lemma}\label{pij}
Let $\{X_i\}_{i\ge1}$ be real-valued random variables with ECDF
\[
\widehat{F}_m(x) = \frac{1}{m}\sum_{i=1}^m \mathbf{1}_{\{X_i \le x\}},
\]
and let $\bar{F}_m(x) = \frac{1}{m}\sum\limits_{i=1}^{m} F_i(x)$, where $F_i$ is the distribution function of $X_i$. Assume that
\[
\sup_{x\in\mathbb{R}} |\widehat{F}_m(x)-\bar{F}_m(x)| \to 0 \quad \text{ in probability} \quad \text{ as } \:  m \to \infty .
\]
Let $h:\mathbb{R}\to\mathbb{R}$ be continuous, and define $A(y) = h^{-1}((-\infty,y])$ for each $y$. Assume that for every $y$, $A(y)$ is a finite union of (possibly unbounded) closed intervals,
\[
A(y) = \bigcup_{j=1}^{k(y)} [a_j(y),b_j(y)], \qquad k(y)\le K<\infty,
\]
and that $\bar{F}_m$ is continuous at each boundary point $a_j(y)$ and $b_j(y)$. Define $X_i' = h(X_i)$ and
\[
\widehat{G}_m(y) = \frac{1}{m}\sum_{i=1}^m \mathbf{1}_{\{X_i' \le y\}}, \qquad
\bar{G}_m(y) = \frac{1}{m}\sum_{i=1}^m G_i(y),
\]
where $G_i$ is the distribution function of $X_i'$. Then
\[
\sup_{y\in\mathbb{R}} |\widehat{G}_m(y)-\bar{G}_m(y)| \to 0 \quad \text{ in probability} \quad \text{ as } \: m \to \infty.
\]
\end{lemma}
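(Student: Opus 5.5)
The plan is to reduce the statement about the transformed variables to the assumed uniform convergence for the original ones, via the preimage sets $A(y)$. For every $y$,
\[
\widehat{G}_m(y)=\frac1m\sum_{i=1}^m \mathbf{1}_{\{X_i\in A(y)\}},\qquad
\bar G_m(y)=\frac1m\sum_{i=1}^m \Pr\big(X_i\in A(y)\big),
\]
because $\{h(X_i)\le y\}=\{X_i\in A(y)\}$. So it suffices to show that, uniformly over $y$, the empirical measure of $A(y)$ is close to the averaged population measure $\bar\mu_m(A(y))$, where $\bar\mu_m=\frac1m\sum_i \mathcal L(X_i)$ has distribution function $\bar F_m$.

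\textbf{Reducing to half-lines.} For a closed interval $[a,b]$ with $a\le b$, write
\[
\mathbf{1}_{[a,b]}(x)=\mathbf{1}_{\{x\le b\}}-\mathbf{1}_{\{x< a\}},
\]
and analogously for unbounded intervals. This gives
\[
\widehat\mu_m([a,b])=\widehat F_m(b)-\widehat F_m(a-),\qquad
\bar\mu_m([a,b])=\bar F_m(b)-\bar F_m(a-).
\]
Since $\widehat F_m(a-)=\lim_{x\uparrow a}\widehat F_m(x)$, and similarly for $\bar F_m$, we get
\[
|\widehat F_m(a-)-\bar F_m(a-)|\le \sup_x|\widehat F_m(x)-\bar F_m(x)|=:D_m .
\]
Hence each interval contributes an error of at most $2D_m$. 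The continuity of $\bar F_m$ at the boundary points means $\bar F_m(a-)=\bar F_m(a)$, so $\bar\mu_m$ of the closed interval agrees with the expression used above.

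\textbf{Handling overlaps and summing.} The intervals in the decomposition of $A(y)$ may overlap. I would first replace them by the connected components of $A(y)$, which is a union of at most $K$ closed intervals and so has at most $K$ disjoint closed components. On disjoint intervals both measures are additive, so
\[
|\widehat G_m(y)-\bar G_m(y)|\le 2K\,D_m .
\]
This bound holds for every $y$ simultaneously, so $\sup_y|\widehat G_m(y)-\bar G_m(y)|\le 2KD_m\to0$ in probability, by the hypothesis $D_m\to0$ in probability.

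\textbf{Main obstacle.} The delicate point is making the bound genuinely uniform in $y$ while the sets $A(y)$ vary. This is handled because the only quantity controlling the error is $D_m$, which does not depend on $y$, together with the uniform cap $K$ on the number of components. The left-limit step also needs care. The bound via $D_m$ goes through without continuity, but continuity of $\bar F_m$ at the endpoints is needed so that point masses at boundary points do not make the closed-set probabilities differ from the values the $\bar F_m$ differences produce. If atoms are present, the difference on the empirical side is still controlled by $D_m$ through left limits, so the argument survives.
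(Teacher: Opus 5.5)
Your proposal is correct and follows essentially the same route as the paper. You rewrite $\widehat G_m(y)$ and $\bar G_m(y)$ through the preimage $A(y)$, merge it into at most $K$ disjoint closed intervals, write each interval's mass as $F(b)-F(a-)$, and bound every endpoint discrepancy by $D_m=\sup_x|\widehat F_m(x)-\bar F_m(x)|$ to obtain the uniform bound $2KD_m$. Your closing remark is muddled, but its conclusion is right: the continuity hypothesis is dispensable, because $\bar\mu_m([a,b])=\bar F_m(b)-\bar F_m(a-)$ holds for any averaged law and $|\widehat F_m(a-)-\bar F_m(a-)|\le D_m$ always, whereas the paper uses continuity only to replace $\bar F_m(a-)$ by $\bar F_m(a)$.
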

\noindent\textbf{Proof :}
Fix \(y\in\mathbb{R}\). After merging overlapping intervals, write
\[
A(y)=\bigcup_{j=1}^{r(y)}[a_j(y),b_j(y)]
\]
as a disjoint union, where \(r(y)\le k(y)\le K\). Then
\[
\widehat{G}_m(y)
=
\sum_{j=1}^{r(y)}
\bigl\{
\widehat{F}_m(b_j(y))-\widehat{F}_m(a_j(y)-)
\bigr\},
\]
whereas, by the continuity of \(\bar{F}_m \) at the boundary points,
\[
\bar{G}_m(y)
=
\sum_{j=1}^{r(y)}
\bigl\{
\bar{F}_m(b_j(y))-\bar{F}_m(a_j(y))
\bigr\}.
\]
Hence,
\[
\begin{aligned}
|\widehat{G}_m(y)-\bar{G}_m(y)|
&\le
\sum_{j=1}^{r(y)}
\left|
\widehat{F}_m(b_j(y))-\bar{F}_m(b_j(y))
\right| \\
&\quad+
\sum_{j=1}^{r(y)}
\left|
\widehat{F}_m(a_j(y)-)-\bar{F}_m(a_j(y))
\right| \\
&\le
2r(y)\sup_{x\in\mathbb{R}}
|\widehat{F}_m(x)-\bar{F}_m(x)| \\
&\le
2K\sup_{x\in\mathbb{R}}
|\widehat{F}_m(x)-\bar{F}_m(x)|.
\end{aligned}
\]
Taking the supremum over \(y\in\mathbb{R}\) and using the assumption
gives
\[
\sup_{y\in\mathbb{R}}|\widehat{G}_m(y)-\bar{G}_m(y)|
\to 0 \quad \text{ in probability}.
\]
\hfill\(\square\)
\begin{corollary}\label{l_pval}
Let \(T_1,\ldots,T_m\) be real-valued test statistics with ECDF
\[
\widehat{F}_m(t)=\frac{1}{m}\sum_{i=1}^m\mathbf{1}_{\{T_i\le t\}},
\]
and average CDF
\[
\bar{F}_m(t)=\frac{1}{m}\sum_{i=1}^m F_i(t),
\]
where \(F_i\) is the distribution function of \(T_i\). Assume that
\[
\sup_{t \in\mathbb{R}} |\widehat{F}_m(t)-\bar{F}_m(t)| \to  0 \quad \text{ in probability}.
\]
Define two-sided normal p-values
\[
P_i=2\bigl(1-\Phi(|T_i|)\bigr),
\]
and let
\[
\widehat{G}_m(p)=\frac{1}{m}\sum_{i=1}^m\mathbf{1}_{\{P_i\le p\}},
\qquad
\bar{G}_m(p)=\frac{1}{m}\sum_{i=1}^m G_i(p),
\]
where \(G_i\) is the distribution function of \(P_i\). Then
\[
\sup_{p \in [0,1]} |\widehat{G}_m(p)-\bar{G}_m(p)| \to 0  \quad \text{ in probability}.
\]
\end{corollary}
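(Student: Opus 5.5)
We apply Lemma~\ref{pij} with $h(t)=2\{1-\Phi(|t|)\}$, which is continuous on $\mathbb{R}$. The first step is to compute the preimages $A(p)=h^{-1}((-\infty,p])$.
\begin{itemize}
\item For $p<0$ the preimage is empty.
\item For $p\ge 1$ it is all of $\mathbb{R}$.
\item For $0\le p<1$, since $h$ is even and strictly decreasing in $|t|$, we get
\[
A(p)=\{t:|t|\ge c_p\}=(-\infty,-c_p]\cup[c_p,\infty),\qquad c_p=\Phi^{-1}(1-p/2)\in(0,\infty].
\]
\end{itemize}
Each $A(p)$ is therefore a union of at most $K=2$ closed, possibly unbounded, intervals. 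For $p=0$ we have $c_p=\infty$, and the set is empty up to the conventions for infinite endpoints.

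The conclusion of Lemma~\ref{pij} then gives the uniform result directly. One could take the supremum over all $p\in\mathbb{R}$. However, $\widehat G_m$ and $\bar G_m$ both equal $0$ for $p<0$ and both equal $1$ for $p\ge1$, so only $p\in[0,1]$ matters.

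The main obstacle is the hypothesis of Lemma~\ref{pij} that $\bar F_m$ be continuous at the boundary points $\pm c_p$; it is not assumed in the corollary. I would avoid this hypothesis by redoing the short argument directly, with left limits handled explicitly. For $0<p<1$,
\[
\widehat G_m(p)=\widehat F_m(-c_p)+1-\widehat F_m(c_p-),\qquad \bar G_m(p)=\bar F_m(-c_p)+1-\bar F_m(c_p-).
\]
It then suffices to bound
\[
|\widehat F_m(c_p-)-\bar F_m(c_p-)|\le\sup_t|\widehat F_m(t)-\bar F_m(t)|,
\]
which follows by taking limits $t\uparrow c_p$ in the uniform bound, since both left limits exist. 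This gives
\[
\sup_{p}|\widehat G_m(p)-\bar G_m(p)|\le 2\sup_t|\widehat F_m(t)-\bar F_m(t)|\to0
\]
in probability.

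The endpoint $p=0$ is treated separately. There $\widehat G_m(0)$ counts $T_i$ with $|T_i|=\infty$, which never happens for real-valued statistics, so both sides vanish. The endpoint $p=1$ is trivial. In our application the $T_{ij}$ are continuous, so the continuity condition of Lemma~\ref{pij} holds anyway, and the lemma could also be invoked verbatim.
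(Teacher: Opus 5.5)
Your proposal is correct and follows the paper's route. The paper likewise computes $h^{-1}((-\infty,p])=(-\infty,-c_p]\cup[c_p,\infty)$ and concludes via Lemma~\ref{pij} with $K=2$, which amounts to your bound by $2\sup_t|\widehat F_m(t)-\bar F_m(t)|$.

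Your explicit left-limit identity $\bar G_m(p)=\bar F_m(-c_p)+1-\bar F_m(c_p-)$ is the more careful version. The paper invokes Lemma~\ref{pij} without checking its hypothesis that $\bar F_m$ is continuous at $\pm c_p$; the corollary does not assume this, and the paper never establishes it for the $T_{ij}$ (only for the $\widetilde T_{ij}$). Your closing remark that the lemma applies verbatim is therefore unsupported, and your continuity-free argument is the one to keep.
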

\noindent\textbf{Proof :}
Let \(h(t)=2\{1-\Phi(|t|)\}\), which is continuous. For
\(p\in(0,1)\), put \(c_p=\Phi^{-1}(1-p/2)\). Then
\[
P_i\le p
\quad\Longleftrightarrow\quad
|T_i|\ge c_p,
\]
and hence
\[
h^{-1}((-\infty,p])
=
(-\infty,-c_p]\cup[c_p,\infty).
\]
Thus, for \(p\in(0,1)\), the preimage is a union of two closed
intervals. For \(p=0\), the preimage is empty, whereas for \(p=1\),
it is \(\mathbb{R}\). Therefore, the assumptions of Lemma
\ref{pij} hold with \(K=2\). Consequently,
\[
\sup_{p \in [0,1]} |\widehat{G}_m(p)-\bar{G}_m(p)| \to 0 \quad \text{ in probability}.
\]
\hfill\(\square\)
\\[0.05 in]
\noindent\textbf{Proof of Theorem \ref{th1} :}
Let \(\mathcal I=\{(i,j):1\le i<j\le k\}\),
\(\theta_{ij}=\omega_{ij}/\sqrt{\omega_{ii}\omega_{jj}}\), and
\(\mu_{ij}=\sqrt n\,\theta_{ij}\), and define
\(Z_{ij}=U_{ij}/\sqrt{\delta_{ii}\delta_{jj}}\) and
\(\widetilde T_{ij}=Z_{ij}-\mu_{ij}\). Let \(\widehat{\widetilde{H}}_N\) and
\(\overline{\widetilde{H}}_{N}\) be the ECDF and the average distribution
function of \(\{\widetilde T_{ij}:(i,j)\in\mathcal I\}\), respectively.
Define \(\widehat{H}_N\) and \(\bar{H}_N\) analogously for
\(\{T_{ij}:(i,j)\in\mathcal I\}\). \\[0.05 in]
The uniform expansion in equation~(26) of
\citet{liu2013gaussian}, together with the bounds for
$\hat r_{ii}$ and $b_{n,ij}$ established in the proof of
Theorem~3.1 therein, gives
\[
\begin{aligned}
R_{n,k}
&:=
\max_{(i,j)\in\mathcal I}
\left|
Z_{ij}-T_{ij}
-\sqrt{\frac{n}{\hat r_{ii}\hat r_{jj}}}\,
b_{n,ij}\frac{\omega_{ij}}{\omega_{ii}\omega_{jj}}
\right|
=o_P(1), \\
D_{n,k}
&:=
\max_{(i,j)\in\mathcal I}
\left|
b_{n,ij}
\sqrt{\frac{\delta_{ii}\delta_{jj}}{\hat r_{ii}\hat r_{jj}}}
-1
\right|
=
O_P\left(\sqrt{\frac{\log k}{n}}\right).
\end{aligned}
\]
Since
\[
\sqrt{\frac{n}{\hat r_{ii}\hat r_{jj}}}\,
b_{n,ij}\frac{\omega_{ij}}{\omega_{ii}\omega_{jj}}
=
\mu_{ij}
b_{n,ij}
\sqrt{\frac{\delta_{ii}\delta_{jj}}{\hat r_{ii}\hat r_{jj}}},
\]
we have
\begin{equation}
|T_{ij}-\widetilde T_{ij}|
\le
R_{n,k}+D_{n,k}|\mu_{ij}|.
\label{eq:T-tildeT}
\end{equation}
For \(a=(i,j)\) and \(b=(i',j')\) in \(\mathcal I\), let
\(\rho_{a,b}=\operatorname{Corr}(U_{ij},U_{i'j'})\). Fix
\(\eta\in(0,1)\). For \(a\ne b\) with \(|\rho_{a,b}|\le\eta\), the
bivariate Berry--Esseen theorem and Lemma~\ref{l_ecdf} yield, for some
constant \(C_{1,\eta}>0\), uniformly in \(s,t\in\mathbb R\),
\[
\left|
\operatorname{Cov}\bigl(
I_{\{\widetilde T_{ij}\le s\}},
I_{\{\widetilde T_{i'j'}\le t\}}
\bigr)
\right|
\le
\frac{|\rho_{a,b}|}{4}+\frac{C_{1,\eta}}{\sqrt n}.
\]
By Lemma~\ref{l_cov},
\[
\rho_{(ij),(i'j')}
=
\frac{
\delta_{ii'}\delta_{jj'}
+\delta_{ij'}\delta_{i'j}
}{
\sqrt{
(\delta_{ii}\delta_{jj}+\delta_{ij}^{2})
(\delta_{i'i'}\delta_{j'j'}+\delta_{i'j'}^{2})
}
}.
\]
Condition~\eqref{C1} implies that, for some constant \(C_2>0\),
\[
|\rho_{(ij),(i'j')}|
\le
C_2\bigl(
|\omega_{ii'}|\,|\omega_{jj'}|
+
|\omega_{ij'}|\,|\omega_{i'j}|
\bigr),
\]
and hence, for some constant \(C_3>0\), we have
\begin{equation}
\sum_{\substack{a,b\in\mathcal I\\a\ne b}}
|\rho_{a,b}|
\le
C_3\left(
k+\sum_{1\le i<j\le k}|\omega_{ij}|
\right)^2.
\label{eq:rho-sum}
\end{equation}
Since \(I_{\{|\rho_{a,b}|>\eta\}}\le\eta^{-1}|\rho_{a,b}|\) and
\(|\operatorname{Cov}(I_A,I_B)|\le1/4\), it follows that, for some
constant \(C_{4,\eta}>0\),
\begin{equation}
\sup_{t\in\mathbb R}
\operatorname{Var}\{\widehat{\widetilde{H}}_N(t)\}
\le
\frac{1}{4N}
+
\frac{C_{1,\eta}}{\sqrt n}
+
\frac{C_{4,\eta}}{N^2}
\left(
k+\sum_{1\le i<j\le k}|\omega_{ij}|
\right)^2.
\label{eq:variance-Htilde}
\end{equation}
Suppose first that
\(\sum_{1\le i<j\le k}|\omega_{ij}|=o(k^2)\). Then, for every
\(t\in\mathbb R\),
\begin{equation}
\widehat{\widetilde{H}}_N(t)-\overline{\widetilde{H}}_{N}(t)
\to 0 \quad \text{ in probability}.
\label{eq:pointwise-Htilde}
\end{equation}
Fix \(t\in\mathbb R\), \(M>4(|t|+1)\), and \(L>0\). Choose
\(a_{n,k}\downarrow0\) such that
\(\Pr(R_{n,k}>a_{n,k})\longrightarrow0\), and put
\[
\mathcal E_{n,k}(L)
=
\left\{
R_{n,k}\le a_{n,k},\ 
D_{n,k}\le L\sqrt{\frac{\log k}{n}}
\right\}.
\]
On \(\mathcal E_{n,k}(L)\),
\[
|T_{ij}-\widetilde T_{ij}|
\le
a_{n,k}
+
LM\sqrt{\frac{\log k}{n}}
=:h_{n,k}(M,L)
\]
whenever \(|\mu_{ij}|\le M\). Hence,
\[
I_{\{T_{ij}\le t\}}\ne I_{\{\widetilde T_{ij}\le t\}}
\quad\Longrightarrow\quad
|\widetilde T_{ij}-t|\le h_{n,k}(M,L).
\]
Let
\(
\sigma_{ij}^2=1+\theta_{ij}^2.
\)
Since
\[
Z_{ij}
=
\frac{1}{\sqrt n}
\sum_{\ell=1}^n
\frac{
\varepsilon_{\ell i}\varepsilon_{\ell j}
-E(\varepsilon_{\ell i}\varepsilon_{\ell j})
}{
\sqrt{\delta_{ii}\delta_{jj}}
},
\]
where the standardized summands have uniformly bounded third absolute
moments, the univariate Berry--Esseen theorem gives, for some constant
$C_{\mathrm{BE}}>0$,
\[
\sup_{(i,j)\in\mathcal I}\sup_{x\in\mathbb R}
\left|
\Pr(\widetilde T_{ij}\le x)
-
\Phi\left(
\frac{x+\mu_{ij}}{\sigma_{ij}}
\right)
\right|
\le
\frac{C_{\mathrm{BE}}}{\sqrt n}.
\]
Since $1\le\sigma_{ij}^2\le2$, it follows that, for some constants
$C_5,C_6>0$,
\begin{equation}
\sup_{(i,j)\in\mathcal I}\sup_{u\in\mathbb R}
\Pr\bigl(|\widetilde T_{ij}-u|\le h\bigr)
\le
C_5h+\frac{C_6}{\sqrt n},
\qquad h>0.
\label{eq:anti-concentration}
\end{equation}
For all sufficiently large \(n\) and \(k\), on
\(\mathcal E_{n,k}(L)\),
\[
I_{\{T_{ij}\le t\}}\ne I_{\{\widetilde T_{ij}\le t\}},
\quad
|\mu_{ij}|>M
\quad\Longrightarrow\quad
|Z_{ij}|\ge\frac{M}{2}.
\]
Indeed, \(a_{n,k}\le1\) and
\(L\sqrt{\log k/n}\le1/4\) for all sufficiently large \(n\) and \(k\).
Since \(\operatorname{Var}(Z_{ij})=1+\theta_{ij}^{2}\le2\),
Chebyshev's inequality yields
\[
\sup_{\{(i,j):\,|\mu_{ij}|>M\}}
\Pr\left(|Z_{ij}|\ge\frac{M}{2}\right)
\le
\frac{8}{M^2}.
\]
Let
\[
\Delta_N(t)
=
\frac{1}{N}
\sum_{(i,j)\in\mathcal I}
\left|
I_{\{T_{ij}\le t\}}-I_{\{\widetilde T_{ij}\le t\}}
\right|.
\]
Then
\[
E\!\left[
\Delta_N(t)I_{\mathcal E_{n,k}(L)}
\right]
\le
C_5h_{n,k}(M,L)
+\frac{C_6}{\sqrt n}
+\frac{8}{M^2}.
\]
Therefore, for every \(\varepsilon>0\),
\[
\Pr\{\Delta_N(t)>\varepsilon\}
\le
\Pr\{\mathcal E_{n,k}(L)^c\}
+
\frac{1}{\varepsilon}
\left\{
C_5h_{n,k}(M,L)
+\frac{C_6}{\sqrt n}
+\frac{8}{M^2}
\right\}.
\]
Letting first $\min\{n,k\}\to\infty$, then \(L\to\infty\), and finally
\(M\to\infty\), gives \(\Delta_N(t)\to 0 \text{ in probability}\). As
\(0\le\Delta_N(t)\le1\),
\[
|\bar{H}_N(t)-\overline{\widetilde{H}}_{N}(t)|
\le
E\{\Delta_N(t)\}\longrightarrow0.
\]
Together with \eqref{eq:pointwise-Htilde},
\(\widehat{H}_N(t)-\bar{H}_N(t)\to 0 \text{ in probability}\), \(t\in\mathbb R\). The same
argument, with \(\le\) replaced by \(<\), gives
\(\widehat{H}_N(t-)-\bar{H}_N(t-)\to 0 \text{ in probability}\). For \(x\in(0,1)\), let \(z_x=\Phi^{-1}(1-x/2)\). Then
\[
\widehat{F}_N(x)=1+\widehat{H}_N(-z_x)-\widehat{H}_N(z_x-),
\]
and the same identity holds for \(\bar{F}_N (x)\). Thus,
\(\widehat{F}_N(x)-\bar{F}_N (x)\to 0 \text{ in probability}\), \(x\in(0,1)\). The cases
\(x=0\) and \(x=1\) are immediate. \\[0.05 in]
Now assume condition \eqref{C2}, and define
\[
\gamma_{n,k}
=
\frac{1}{N}
\sum_{(i,j)\in\mathcal I}
\min\left\{
1,\sqrt{\log k}\,|\theta_{ij}|
\right\}.
\]
For all sufficiently large \(k\),
\[
\sum_{1\le i<j\le k}|\omega_{ij}|
\le
c_0\sum_{(i,j)\in\mathcal I}|\theta_{ij}|
\le
c_0N\gamma_{n,k}
=
o(N).
\]
Hence, the right-hand side of \eqref{eq:variance-Htilde} converges to zero uniformly in
\(t\). Each \(\widetilde T_{ij}\) has a continuous distribution, so
\(\overline{\widetilde{H}}_{N}\) is continuous. A finite-grid argument and
Chebyshev's inequality now give
\begin{equation}
\|\widehat{\widetilde{H}}_N-\overline{\widetilde{H}}_{N}\|_\infty
\to 0 \quad \text{ in probability}.
\label{eq:uniform-Htilde}
\end{equation}
Fix \(0<h<2L\), and let
\[
\mathcal F_{n,k}(h,L)
=
\left\{
R_{n,k}\le\frac h2,\ 
D_{n,k}\le L\sqrt{\frac{\log k}{n}}
\right\}.
\]
On \(\mathcal F_{n,k}(h,L)\),
\[
\sqrt{\log k}\,|\theta_{ij}|
\le
\frac{h}{2L}
\quad\Longrightarrow\quad
|T_{ij}-\widetilde T_{ij}|\le h.
\]
Moreover,
\[
\frac{1}{N}
\sum_{(i,j)\in\mathcal I}
I_{\left\{
\sqrt{\log k}\,|\theta_{ij}|>h/(2L)
\right\}}
\le
\frac{2L}{h}\gamma_{n,k}.
\]
Indeed, for every pair satisfying $|T_{ij}-\widetilde T_{ij}|\le h$,
\[
\left|
I_{\{T_{ij}\le t\}}-I_{\{\widetilde T_{ij}\le t\}}
\right|
\le
I_{\{t-h<\widetilde T_{ij}\le t+h\}}.
\]
For the remaining pairs, the absolute difference of the indicators is
bounded by one. Therefore, on $\mathcal F_{n,k}(h,L)$,
\[
\|\widehat{H}_N-\widehat{\widetilde{H}}_N\|_\infty
\le
\sup_{t\in\mathbb R}
\{\widehat{\widetilde{H}}_N(t+h)-\widehat{\widetilde{H}}_N(t-h)\}
+
\frac{2L}{h}\gamma_{n,k}.
\]
By adding and subtracting $\overline{\widetilde{H}}_{N}$ and using
\eqref{eq:anti-concentration},
\[
\sup_{t\in\mathbb R}
\{\widehat{\widetilde{H}}_N(t+h)-\widehat{\widetilde{H}}_N(t-h)\}
\le
2\|\widehat{\widetilde{H}}_N-\overline{\widetilde{H}}_{N}\|_\infty
+
C_5h+\frac{C_6}{\sqrt n}.
\]
Thus, for every \(\varepsilon>0\),
\[
\Pr\left(
\|\widehat{H}_N-\widehat{\widetilde{H}}_N\|_\infty>\varepsilon
\right)
\le
\Pr\{\mathcal F_{n,k}(h,L)^c\} 
+
\Pr\left(
2\|\widehat{\widetilde{H}}_N-\overline{\widetilde{H}}_{N}\|_\infty
+
C_5h+\frac{C_6}{\sqrt n}
+
\frac{2L}{h}\gamma_{n,k}
>
\varepsilon
\right).
\]
Using \eqref{eq:uniform-Htilde} and condition \eqref{C2}, and then letting first
$\min\{n,k\}\to\infty$, next \(L\to\infty\), and finally \(h\downarrow0\),
yields \(\|\widehat{H}_N-\widehat{\widetilde{H}}_N\|_\infty\to 0 \: \: \text{ in probability}\). Since this
norm is bounded by one,
\[
\|\bar{H}_N-\overline{\widetilde{H}}_{N}\|_\infty
\le
E\|\widehat{H}_N-\widehat{\widetilde{H}}_N\|_\infty
\longrightarrow0.
\]
Hence,
\[
\|\widehat{H}_N-\bar{H}_N\|_\infty\to 0 \quad \text{ in probability}.
\]
Corollary~\ref{l_pval} now gives
\[
\|\widehat{F}_N-\bar{F}_N\|_\infty\to 0 \quad \text{ in probability}.
\]
\hfill\(\square\)
\\[0.05 in]
We will now present a few lemmas that facilitate the proof of
Corollary \ref{cor1}.
\begin{lemma}\label{Fpij}
Under Condition~\eqref{C1} and if  \(\log k=o(\sqrt n)\), we have
\[
\sup_{\lambda\in[0,1]}
\frac{1}{N}
\sum_{(i,j)\in I_0(k)}
\left|
\Pr(p_{ij}\le\lambda)-\lambda
\right|
=o(1),
\]
where \(N=k(k-1)/2\) is the total number of hypotheses.
\end{lemma}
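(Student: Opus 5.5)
Under $H_{0,ij}$ we have $\omega_{ij}=0$, so $\theta_{ij}=0$ and $\mu_{ij}=0$. Hence, in the notation of the proof of Theorem~\ref{th1}, $\widetilde T_{ij}=Z_{ij}$ with $\operatorname{Var}(Z_{ij})=\sigma_{ij}^2=1$. Since $|I_0(k)|\le N$, it suffices to prove the stronger statement
\[
\max_{(i,j)\in I_0(k)}\sup_{\lambda\in[0,1]}\left|\Pr(p_{ij}\le\lambda)-\lambda\right|\longrightarrow0 .
\]
Because $p_{ij}=G(|T_{ij}|)$ and $G$ is a continuous decreasing bijection from $[0,\infty)$ onto $(0,1]$, we can write $\Pr(p_{ij}\le\lambda)=\Pr(|T_{ij}|\ge z_\lambda)$ with $z_\lambda=\Phi^{-1}(1-\lambda/2)$, while $\lambda=\Pr(|Z|\ge z_\lambda)$ for $Z\sim N(0,1)$. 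The claim therefore reduces to showing
\[
\max_{(i,j)\in I_0(k)}\sup_{u\ge0}\bigl|\Pr(|T_{ij}|\ge u)-\Pr(|Z|\ge u)\bigr|\to0 .
\]
This is the uniform null convergence $T_{ij}\overset{d}{\to}N(0,1)$ recalled in Section~\ref{Desc}, upgraded to a Kolmogorov-distance statement. I would nonetheless derive it explicitly from the ingredients used in the proof of Theorem~\ref{th1}, so that the uniformity in $(i,j)$ is transparent.

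The first step is the null Gaussian approximation. Under the null, the univariate Berry--Esseen bound in the proof of Theorem~\ref{th1} gives
\[
\sup_{(i,j)\in I_0(k)}\sup_x\bigl|\Pr(Z_{ij}\le x)-\Phi(x)\bigr|\le \frac{C_{\rm BE}}{\sqrt n},
\]
and the same bound, with a factor of two, holds for $|Z_{ij}|$. The second step is the coupling. By \eqref{eq:T-tildeT} with $\mu_{ij}=0$, we have $|T_{ij}-Z_{ij}|\le R_{n,k}$ for all null pairs simultaneously, where $R_{n,k}=o_P(1)$. Choose $a_{n,k}\downarrow0$ with $\Pr(R_{n,k}>a_{n,k})\to0$. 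Then, for every $u$ and every null pair,
\[
\Pr(|Z_{ij}|\ge u+a_{n,k})-\Pr(R_{n,k}>a_{n,k})\le\Pr(|T_{ij}|\ge u)\le \Pr(|Z_{ij}|\ge u-a_{n,k})+\Pr(R_{n,k}>a_{n,k}).
\]
The third step combines these with the anti-concentration bound \eqref{eq:anti-concentration}, or simply the Lipschitz property of $\Phi$. This gives, uniformly over $u\ge0$ and $(i,j)\in I_0(k)$,
\[
\bigl|\Pr(|T_{ij}|\ge u)-\Pr(|Z|\ge u)\bigr|\le C a_{n,k}+\frac{2C_{\rm BE}}{\sqrt n}+\Pr(R_{n,k}>a_{n,k})\longrightarrow0 .
\]
Averaging over $I_0(k)$, dividing by $N$, and using $|I_0(k)|/N\le1$ then yields the lemma. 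The endpoints $\lambda=0$ and $\lambda=1$ are trivial.

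The main obstacle I expect is justifying that $R_{n,k}\to0$ in probability as a maximum over all pairs, and that this is the only place where the rate condition $\log k=o(\sqrt n)$ enters. This comes from the uniform expansion (26) of \citet{liu2013gaussian} and the bounds on $\hat r_{ii}$ and $b_{n,ij}$, which were already invoked in the proof of Theorem~\ref{th1}, so I would cite them rather than rederive them. The key structural point is that the term $D_{n,k}|\mu_{ij}|$, which needed truncation in the theorem, vanishes identically under the null, so no condition on $\Omega$ is required. A secondary check is that the Berry--Esseen constant is uniform in $(i,j)$. This holds because, under \eqref{C1}, $\delta_{ii}=1/\omega_{ii}\ge 1/c_0$ and $\delta_{ii}\le\sigma_{ii}\le c_0$, so the standardized summands $\varepsilon_{\ell i}\varepsilon_{\ell j}/\sqrt{\delta_{ii}\delta_{jj}}$ have third moments bounded independently of $k$.
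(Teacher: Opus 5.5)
Your proposal is correct and follows essentially the same route as the paper. Under $H_{0,ij}$ the term $D_{n,k}|\mu_{ij}|$ drops out, so $\max_{(i,j)\in I_0(k)}|T_{ij}-Z_{ij}|\le R_{n,k}$; then a deterministic $a_{n,k}\downarrow0$, the uniform null Berry--Esseen bound and the Lipschitz property of $\Phi$ give a Kolmogorov bound uniform over null pairs, and averaging over $I_0(k)$ finishes. The differences are cosmetic: you work with $|T_{ij}|$ directly, whereas the paper bounds $\Pr(T_{ij}\le x)$ and its left limits and writes $\Pr(p_{ij}\le\lambda)=\Pr(T_{ij}\le -z_\lambda)+\Pr(T_{ij}\ge z_\lambda)$; and your third-moment check via Condition (C1) is unnecessary, since under the null the standardized summands are products of independent standard normals.
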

\noindent\textbf{Proof :} For \((i,j)\in I_0(k)\), define
\(
Z_{ij}
=
U_{ij}/\sqrt{\delta_{ii}\delta_{jj}}.
\)
Since \(\mu_{ij}=0\) under \(H_{0,ij}\), relation \eqref{eq:T-tildeT} in the
proof of Theorem \ref{th1} gives
\[
\Delta_{n,k}
:=
\max_{(i,j)\in I_0(k)}
|T_{ij}-Z_{ij}|
\le R_{n,k}
\to 0 \quad \text{ in probability}.
\]
Choose a deterministic sequence \(a_{n,k}\downarrow0\) such that
\(
\Pr(\Delta_{n,k}>a_{n,k})\longrightarrow0.
\)
Under \(H_{0,ij}\), the variables
\(
\varepsilon_{li}/\sqrt{\delta_{ii}}
\) and \(
\varepsilon_{lj}/\sqrt{\delta_{jj}}
\)
are independent standard Gaussian random variables. Hence, the
Berry--Esseen theorem gives
\[
\sup_{(i,j)\in I_0 (k)}
\sup_{x\in\mathbb R}
\left|
\Pr(Z_{ij}\le x)-\Phi(x)
\right|
\le
\frac{C}{\sqrt n}.
\]
Therefore,
\begin{equation}
\sup_{(i,j)\in I_0 (k) }
\sup_{x\in\mathbb R}
\left|
\Pr(T_{ij}\le x)-\Phi(x)
\right|
\le
\Pr(\Delta_{n,k}>a_{n,k})
+
\frac{a_{n,k}}{\sqrt{2\pi}}
+
\frac{C}{\sqrt n} 
=o(1).
\label{eq:null-normal-approx}
\end{equation} The same bound holds for the left limits of the distribution functions. For \(0<\lambda<1\), let
\(
z_\lambda=\Phi^{-1}(1-\lambda/2).
\)
Then
\[
\Pr(p_{ij}\le\lambda)
=
\Pr(T_{ij}\le-z_\lambda)
+
\Pr(T_{ij}\ge z_\lambda).
\]
It follows from \eqref{eq:null-normal-approx} that
\[
\sup_{(i,j)\in I_0 (k) }
\sup_{\lambda\in[0,1]}
\left|
\Pr(p_{ij}\le\lambda)-\lambda
\right|
=o(1).
\]
Averaging over \(I_0 (k) \) completes the proof.
\hfill\(\square\)

\begin{lemma}\label{altd}

If \(X\sim N(a,\sigma^2)\), where \(a\in\mathbb R\) and
\(\sigma^2\ge1\), then the two-sided p-value
\[
P=2\bigl(1-\Phi(|X|)\bigr)
\]
has a concave distribution function on \([0,1]\).

\end{lemma}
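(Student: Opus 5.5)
The plan is to compute the distribution function of $P$ explicitly, differentiate it in $p$, and show that the resulting density is nonincreasing on $(0,1)$. A continuous function on $[0,1]$ that is differentiable on $(0,1)$ with nonincreasing derivative is concave, so this suffices.

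\emph{Step 1: the distribution function.} For $p\in(0,1)$ put $c_p=\Phi^{-1}(1-p/2)$, exactly as in the proof of Corollary~\ref{l_pval}. Then $\{P\le p\}=\{|X|\ge c_p\}$, so
\[
F(p)=\Pr(P\le p)=1-\Phi\!\left(\frac{c_p-a}{\sigma}\right)+\Phi\!\left(\frac{-c_p-a}{\sigma}\right).
\]
This is continuous on $[0,1]$, with $F(0)=0$ and $F(1)=1$, because $X$ has a continuous distribution. It is differentiable on $(0,1)$, since $c_p$ is a smooth decreasing function of $p$ there with $dc_p/dp=-1/\{2\varphi(c_p)\}$, where $\varphi$ denotes the standard normal density. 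The chain rule then gives the density
\[
f(p)=\frac{\varphi\bigl((c_p-a)/\sigma\bigr)+\varphi\bigl((c_p+a)/\sigma\bigr)}{2\sigma\,\varphi(c_p)}.
\]

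\emph{Step 2: reduce to monotonicity in $c$.} Since $p\mapsto c_p$ is a decreasing bijection from $(0,1)$ onto $(0,\infty)$, $f$ is nonincreasing in $p$ if and only if
\[
g(c)=\frac{\varphi\bigl((c-a)/\sigma\bigr)+\varphi\bigl((c+a)/\sigma\bigr)}{\varphi(c)}
\]
is nondecreasing in $c\in(0,\infty)$.

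\emph{Step 3: monotonicity of $g$.} Expanding the Gaussian exponents and combining the two terms gives
\[
g(c)=2\exp\!\left(-\frac{a^2}{2\sigma^2}\right)\exp\!\left\{\frac{c^2}{2}\Bigl(1-\frac{1}{\sigma^2}\Bigr)\right\}\cosh\!\left(\frac{ac}{\sigma^2}\right).
\]
The first factor does not depend on $c$. The second is nondecreasing on $c\ge0$ precisely because $\sigma^2\ge1$, and this is where the hypothesis enters. The third is nondecreasing on $c\ge0$ for every $a\in\mathbb R$, since $\cosh$ is even and increasing on $[0,\infty)$. So $g$ is a product of positive nondecreasing functions and is therefore nondecreasing. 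Hence $f$ is nonincreasing on $(0,1)$, and $F$ is concave on $[0,1]$.

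The only point needing care is the endpoints. At $p=0$ we have $c_p=\infty$, so $f$ may blow up there, but this does not affect concavity. The standard fact I will cite is that a function continuous on a closed interval, with a nonincreasing derivative on its interior, is concave on the whole interval; this follows from the mean value theorem together with continuity at the endpoints. Beyond this, the computation is routine, and the substantive step is the algebraic factorisation in Step 3, which isolates the role of $\sigma^2\ge1$.
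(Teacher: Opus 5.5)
Your proposal is correct and follows essentially the same route as the paper. Both arguments write the distribution function via $c_p=\Phi^{-1}(1-p/2)$, differentiate it, and use the factorisation $\exp\{\tfrac{c^2}{2}(1-\sigma^{-2})\}\cosh(ac/\sigma^2)$ to show the density is nonincreasing in $p$. Your explicit handling of the endpoints (continuity at $0$ and $1$ together with a nonincreasing derivative on the interior) is a small justification the paper leaves implicit.
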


\noindent\textbf{Proof :} For \(0<\lambda<1\), let
\(
z_\lambda=\Phi^{-1}(1-\lambda/2).
\)
Then
\[
F_P(\lambda)
=
1-\Phi\left(\frac{z_\lambda-a}{\sigma}\right)
+
\Phi\left(\frac{-z_\lambda-a}{\sigma}\right).
\]
Differentiating yields
\[
F_P'(\lambda)
=
\frac{
\phi\left((z_\lambda-a)/\sigma\right)
+
\phi\left((z_\lambda+a)/\sigma\right)
}{
2\sigma\phi(z_\lambda)
}.
\]
Equivalently,
\[
F_P'(\lambda)
=
\frac{1}{\sigma}
\exp\left\{
-\frac{a^2}{2\sigma^2}
+
\frac12\left(1-\frac{1}{\sigma^2}\right)z_\lambda^2
\right\}
\cosh\left(\frac{az_\lambda}{\sigma^2}\right).
\]
Since \(\sigma^2\ge1\), this expression is nondecreasing in
\(z_\lambda\ge0\). Since \(z_\lambda\) is decreasing in \(\lambda\),
\(F_P'(\lambda)\) is nonincreasing in \(\lambda\). Hence,
\(F_P\) is concave.
\hfill\(\square\)
\begin{corollary}\label{l_altcdf}
Suppose that the assumptions of Theorem \ref{th1}, including Condition~\eqref{C2}, hold. Define
\[
F_{1,n,k}(\lambda)
=
\frac{1}{N}
\sum_{(i,j)\in I_1 (k)}
\Pr(p_{ij}\le\lambda),
\qquad
0\le\lambda\le1.
\]
Then there exists a sequence of concave functions
\(F_{1,n,k}^{\circ}\) on \([0,1]\) such that
\[
\|F_{1,n,k}-F_{1,n,k}^{\circ}\|_\infty
\longrightarrow0.
\]
\end{corollary}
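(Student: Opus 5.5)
The plan is to build $F^{\circ}_{1,n,k}$ explicitly from the Gaussian approximations of the alternative statistics and then invoke Lemma~\ref{altd}. For $(i,j)\in I_1(k)$, recall from the proof of Theorem~\ref{th1} that $\widetilde T_{ij}=Z_{ij}-\mu_{ij}$, with $\mu_{ij}=\sqrt n\,\theta_{ij}$ and $\sigma_{ij}^2=1+\theta_{ij}^2\in[1,2]$. The candidate is
\[
F^{\circ}_{1,n,k}(\lambda)=\frac{1}{N}\sum_{(i,j)\in I_1(k)}\Pr\bigl(2\{1-\Phi(|Y_{ij}|)\}\le\lambda\bigr),\qquad Y_{ij}\sim N(-\mu_{ij},\sigma_{ij}^2).
\]
Since $\sigma_{ij}^2\ge1$, Lemma~\ref{altd} shows each summand is a concave distribution function on $[0,1]$. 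A nonnegative combination of concave functions is concave, so $F^{\circ}_{1,n,k}$ is concave. It then remains to show that $\sup_\lambda|F_{1,n,k}(\lambda)-F^{\circ}_{1,n,k}(\lambda)|\to0$.

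For the approximation I would bound each summand and average. First, the univariate Berry--Esseen bound from the proof of Theorem~\ref{th1} gives $\sup_x|\Pr(\widetilde T_{ij}\le x)-\Pr(Y_{ij}\le x)|\le C_{\mathrm{BE}}/\sqrt n$, uniformly in $(i,j)$. The same bound holds for left limits. Writing $\Pr(p\le\lambda)=\Pr(T\le -z_\lambda)+\Pr(T\ge z_\lambda)$, as in Lemma~\ref{Fpij}, converts this into a uniform-in-$\lambda$ bound of $2C_{\mathrm{BE}}/\sqrt n$ between the p-value CDFs built from $\widetilde T_{ij}$ and from $Y_{ij}$.

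Second, I need to pass from $\widetilde T_{ij}$ to $T_{ij}$. Here I would split $I_1(k)$ according to whether $\sqrt{\log k}\,|\theta_{ij}|\le h/(2L)$, exactly as in the second half of the proof of Theorem~\ref{th1}.
\begin{itemize}
\item \emph{Good pairs.} On $\mathcal F_{n,k}(h,L)$ we have $|T_{ij}-\widetilde T_{ij}|\le h$. Hence $|\Pr(T_{ij}\le x)-\Pr(\widetilde T_{ij}\le x)|\le \Pr(\mathcal F^c_{n,k})+\Pr(|\widetilde T_{ij}-x|\le h)$. The anti-concentration bound \eqref{eq:anti-concentration} controls this by $\Pr(\mathcal F^c_{n,k})+C_5h+C_6/\sqrt n$, uniformly in $x$ and $(i,j)$.
\item \emph{Bad pairs.} Each such summand contributes at most $1$. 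Their total proportion among all $N$ pairs is at most $(2L/h)\gamma_{n,k}\to0$ by Condition~\eqref{C2}.
\end{itemize}
Averaging over $I_1(k)$ and dividing by $N$ (note that $|I_1(k)|/N\le1$) gives
\[
\|F_{1,n,k}-F^{\circ}_{1,n,k}\|_\infty\le 2\Pr(\mathcal F^c_{n,k}(h,L))+2C_5h+\frac{C}{\sqrt n}+\frac{4L}{h}\gamma_{n,k}.
\]
Letting $\min\{n,k\}\to\infty$, then $L\to\infty$, then $h\downarrow0$ completes the argument.

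The main obstacle is the uniform control of $\Pr(\mathcal F^c_{n,k}(h,L))$. This requires $R_{n,k}=o_P(1)$ and $D_{n,k}=O_P(\sqrt{\log k/n})$, which are imported from \citet{liu2013gaussian}. For $L$ large, $D_{n,k}\le L\sqrt{\log k/n}$ holds with probability close to one. One must also check that for good pairs $D_{n,k}|\mu_{ij}|\le L\sqrt{\log k}\,|\theta_{ij}|\le h/2$. A secondary point is handling the endpoints $\lambda=0,1$ and the left limits of the CDFs at $\pm z_\lambda$. These are routine because the Gaussian limits are continuous.
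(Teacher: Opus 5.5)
Your proposal is correct and follows the paper's proof essentially step for step: the same Gaussian surrogate $N(-\mu_{ij},1+\theta_{ij}^2)$, concave via Lemma~\ref{altd}, and the same univariate Berry--Esseen comparison with $\widetilde T_{ij}$. It also uses the same split of pairs at $\sqrt{\log k}\,|\theta_{ij}|\le h/(2L)$, with anti-concentration on the good pairs, the bad fraction bounded by $(2L/h)\gamma_{n,k}$ under Condition~\eqref{C2}, and the same order of limits. The only cosmetic difference is that you convert to p-value CDFs pair by pair before averaging, whereas the paper first averages the statistic-level sup-norm distances and then applies the factor-$2$ conversion; the resulting bounds are the same.
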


\noindent\textbf{Proof :}
For \((i,j)\in I_1 (k)\), let
\[
\theta_{ij}
=
\frac{\omega_{ij}}{\sqrt{\omega_{ii}\omega_{jj}}},
\qquad
\mu_{ij}=\sqrt n\,\theta_{ij},
\]
and let \(V_{ij}\) have distribution
\[
N\bigl(-\mu_{ij},1+\theta_{ij}^{2}\bigr).
\]
Define
\[
G_{ij}^{\circ}(\lambda)
=
\Pr\left\{
2\bigl(1-\Phi(|V_{ij}|)\bigr)\le\lambda
\right\},
\]
and
\[
F_{1,n,k}^{\circ}(\lambda)
=
\frac{1}{N}
\sum_{(i,j)\in I_1 (k)}
G_{ij}^{\circ}(\lambda).
\]
By Lemma \ref{altd}, each \(G_{ij}^{\circ}\) is concave. Hence,
\(F_{1,n,k}^{\circ}\) is concave. \\[0.05 in] 
Let \(H_{ij}\), \(\widetilde H_{ij}\), and \(H_{ij}^{\circ}\) denote
the distribution functions of \(T_{ij}\), \(\widetilde T_{ij}\), and
\(V_{ij}\), respectively. The Berry--Esseen theorem yields
\begin{equation}
\max_{(i,j)\in\mathcal I}
\|\widetilde H_{ij}-H_{ij}^{\circ}\|_\infty
\le
\frac{C}{\sqrt n}.
\label{eq:BE-alternative}
\end{equation}
Fix \(0<h<2L\), and let
\[
\mathcal F_{n,k}(h,L)
=
\left\{
R_{n,k}\le\frac h2,\ 
D_{n,k}\le L\sqrt{\frac{\log k}{n}}
\right\}.
\]
For every \((i,j)\) satisfying
\(
\sqrt{\log k}\,|\theta_{ij}|
\le
\frac{h}{2L},
\)
relation \eqref{eq:T-tildeT} gives
\(
|T_{ij}-\widetilde T_{ij}|
\le h
\)
on \(\mathcal F_{n,k}(h,L)\). Therefore,
\[
\frac{1}{N}
\sum_{(i,j)\in I_1 (k)}
\|H_{ij}-\widetilde H_{ij}\|_\infty
\le
\Pr\{\mathcal F_{n,k}(h,L)^c\}
+
Ch+\frac{C}{\sqrt n} 
+
\frac{1}{N}
\sum_{(i,j)\in\mathcal I}
I_{\left\{
\sqrt{\log k}\,|\theta_{ij}|>h/(2L)
\right\}}.
\]
By condition \eqref{C2},
\[
\frac{1}{N}
\sum_{(i,j)\in\mathcal I}
I_{\left\{
\sqrt{\log k}\,|\theta_{ij}|>h/(2L)
\right\}}
\le
\frac{2L}{h}\gamma_{n,k},
\]
where
\[
\gamma_{n,k}
=
\frac{1}{N}
\sum_{(i,j)\in\mathcal I}
\min\left\{
1,\sqrt{\log k}\,|\theta_{ij}|
\right\}.
\]
Letting first $\min\{n,k\}\to\infty$, then \(L\to\infty\), and finally
\(h\downarrow0\), we obtain
\begin{equation}
\frac{1}{N}
\sum_{(i,j)\in I_1 (k)}
\|H_{ij}-\widetilde H_{ij}\|_\infty
\longrightarrow0.
\label{eq:T-tilde-average}
\end{equation}
For arbitrary real-valued random variables \(X\) and \(Y\),
\[
\sup_{\lambda\in[0,1]}
\left|
\Pr\{2(1-\Phi(|X|))\le\lambda\}
-
\Pr\{2(1-\Phi(|Y|))\le\lambda\}
\right|
\le
2\|F_X-F_Y\|_\infty.
\]
Consequently, by \eqref{eq:BE-alternative} and \eqref{eq:T-tilde-average},
\[
\begin{aligned}
\|F_{1,n,k}-F_{1,n,k}^{\circ}\|_\infty
&\le
\frac{2}{N}
\sum_{(i,j)\in I_1 (k)}
\|H_{ij}-\widetilde H_{ij}\|_\infty \\
&\quad+
\frac{2}{N}
\sum_{(i,j)\in I_1 (k)}
\|\widetilde H_{ij}-H_{ij}^{\circ}\|_\infty \\
&\longrightarrow0.
\end{aligned}
\]
This proves the result.
\hfill\(\square\)
 \\[0.05 in]
\noindent\textbf{Proof of Corollary \ref{cor1} : }
Let
$$
A_{0,n,k}(\lambda)
=
\frac{1}{N}\sum_{(i,j)\in I_0(k)}\Pr(p_{ij}>\lambda).
$$
By Lemma \ref{Fpij},
\begin{equation}\label{null_approx}
\sup_{\lambda\in[0,1]}
\left|A_{0,n,k}(\lambda)-\pi_0(1-\lambda)\right|
=o(1).
\end{equation}
Moreover,
$$
1-F_{n,k}(\lambda)
=
A_{0,n,k}(\lambda)
+
\pi_1\overline F_{1,n,k}(\lambda).
$$
Hence, for every fixed $\lambda\in[0,1)$,
$$
\begin{aligned}
\hat\pi_0(\lambda)
-
\left\{
\pi_0+
\pi_1\frac{\overline F_{1,n,k}(\lambda)}{1-\lambda}
\right\}
&=
\frac{F_{n,k}(\lambda)-\widehat{F}_N(\lambda)}{1-\lambda}\\
&\quad+
\frac{A_{0,n,k}(\lambda)-\pi_0(1-\lambda)}{1-\lambda}.
\end{aligned}
$$
The first term converges to zero in probability by Theorem
\ref{th1}, and the second converges to zero by
\eqref{null_approx}. Therefore,
$$
\hat\pi_0(\lambda)
-
\left\{
\pi_0+
\pi_1\frac{\overline F_{1,n,k}(\lambda)}{1-\lambda}
\right\}
\longrightarrow0
$$
in probability. Since $\overline F_{1,n,k}(\lambda)\ge0$, the
Schweder--Spj{\o}tvoll estimator is asymptotically biased upwards. For the final assertion, define
$$
F_{0,n,k}(\lambda)=\frac{1}{N}\sum_{(i,j)\in I_0(k)}
\Pr(p_{ij}\le\lambda),
\qquad
F_{1,n,k}(\lambda)=\frac{1}{N}\sum_{(i,j)\in I_1(k)}
\Pr(p_{ij}\le\lambda).
$$
Then $F_{n,k}=F_{0,n,k}+F_{1,n,k}$. By Lemma \ref{Fpij},
$$
\sup_{\lambda\in[0,1]}
\left|F_{0,n,k}(\lambda)-\pi_0\lambda\right|
\longrightarrow0.
$$
By Corollary \ref{l_altcdf}, there exists a sequence of concave functions
$F_{1,n,k}^{\circ}$ on $[0,1]$ such that
$$
\|F_{1,n,k}-F_{1,n,k}^{\circ}\|_\infty
\longrightarrow0.
$$
Define
$$
F_{n,k}^{\circ}(\lambda)
=
\pi_0\lambda+F_{1,n,k}^{\circ}(\lambda).
$$
Since $\lambda\mapsto\pi_0\lambda$ is linear and
$F_{1,n,k}^{\circ}$ is concave, $F_{n,k}^{\circ}$ is concave.
Furthermore,
$$
\begin{aligned}
\|F_{n,k}-F_{n,k}^{\circ}\|_\infty
&\le
\sup_{\lambda\in[0,1]}
\left|F_{0,n,k}(\lambda)-\pi_0\lambda\right|
+
\|F_{1,n,k}-F_{1,n,k}^{\circ}\|_\infty\\
&\longrightarrow0.
\end{aligned}
$$
Consequently, every pointwise limit of $F_{n,k}$ is concave.
\hfill$\square$
\bibliography{references}

@article{storey2003statistical,
  title={Statistical significance for genomewide studies},
  author={Storey, John D and Tibshirani, Robert},
  journal={Proceedings of the National Academy of Sciences},
  volume={100},
  number={16},
  pages={9440--9445},
  year={2003},
  publisher={National Academy of Sciences},
   url ={https://doi.org/10.1073/pnas.1530509100}
}

@article{dey2024limiting,
  title={On limiting behaviors of stepwise multiple testing procedures},
  author={Dey, Monitirtha},
  journal={Statistical Papers},
  volume={65},
  number={9},
  pages={5691--5717},
  year={2024},
  publisher={Berlin, Heidelberg: Springer},
  url={https://doi.org/10.1007/s00362-024-01613-6}
}

@article{FDR2007,
author = {Helmut Finner and Thorsten Dickhaus and Markus Roters},
title = {{Dependency and false discovery rate: Asymptotics}},
volume = {35},
journal = {The Annals of Statistics},
number = {4},
publisher = {Institute of Mathematical Statistics},
pages = {1432 -- 1455},
year = {2007},

URL = {https://doi.org/10.1214/009053607000000046}
}

@book{dickhaus2014simultaneous,
  title={Simultaneous statistical inference},
  author={Dickhaus, Thorsten},
  journal={With applications in the life sciences. Springer},
  year={2014},
url={https://link.springer.com/book/10.1007/978-3-642-45182-9},
  publisher={Springer}
}

@article{hengartner1999,
 ISSN = {00905364, 21688966},
 URL = {https://doi.org/10.1214/aos/1176324534},
 author = {Nicolas W. Hengartner and Philip B. Stark},
 journal = {The Annals of Statistics},
 number = {2},
 pages = {525--550},
 publisher = {Institute of Mathematical Statistics},
 title = {Finite-Sample Confidence Envelopes for Shape-Restricted Densities},
 volume = {23},
 year = {1995}
}

@article{swanepoel1999,
author = {Jan W. H. Swanepoel},
title = {{The limiting behavior of a modified maximal symmetric $2s$-spacing with applications}},
volume = {27},
journal = {The Annals of Statistics},
number = {1},
publisher = {Institute of Mathematical Statistics},
pages = {24 -- 35},
year = {1999},
URL = {https://doi.org/10.1214/aos/1018031099}
}

@article{genovese2004stochastic,
author = {Christopher Genovese and Larry Wasserman},
title = {{A stochastic process approach to false discovery control}},
volume = {32},
journal = {The Annals of Statistics},
number = {3},
publisher = {Institute of Mathematical Statistics},
pages = {1035 -- 1061},
year = {2004},
URL = {https://doi.org/10.1214/009053604000000283}
}

@article{sun2012scaled,
  title={Scaled sparse linear regression},
  author={Sun,Tingni and Zhang,Cun-Hui},
  journal={Biometrika},
  volume={99},
  number={4},
  pages={879--898},
  year={2012},
  publisher={Oxford University Press},
  issn = {0006-3444},
    url = {https://doi.org/10.1093/biomet/ass043}
}

@article{cancer2012comprehensive,
  author  = {Koboldt, Daniel C. and
             Fulton, Robert S. and
             McLellan, Michael D. and others},
  title   = {Comprehensive Molecular Portraits of Human Breast Tumours},
  journal = {Nature},
  year    = {2012},
  volume  = {490},
  number  = {7418},
  pages   = {61--70},
  url     = {https://doi.org/10.1038/nature11412}
}

@article{raskutti2008model,
  title={Model Selection in Gaussian Graphical Models: High-Dimensional Consistency of $ l_1$ regularized MLE},
  author={Raskutti, Garvesh and Yu, Bin and Wainwright, Martin J and Ravikumar, Pradeep},
  journal={Advances in Neural Information Processing Systems},
  volume={21},
  year={2008}
}

@article{Lancaster1957,
  title={Some properties of the bivariate normal distribution considered in the form of a contingency table},
  author={Lancaster, Henry Oliver},
  journal={Biometrika},
  volume={44},
  number={1/2},
  pages={289--292},
  year={1957},
  url={https://doi.org/10.1093/biomet/44.1-2.289}
}

@article{lingjaerde2021tailored,
  title={Tailored graphical lasso for data integration in gene network reconstruction},
  author={Lingj{\ae}rde, Camilla and Lien, Tonje G and Borgan, {\O}rnulf and Bergholtz, Helga and Glad, Ingrid K},
  journal={BMC bioinformatics},
  volume={22},
  number={1},
  pages={498},
  year={2021},
  url={https://doi.org/10.1186/s12859-021-04413-z},
  publisher={Springer}
}

@article{10.3150/25-BEJ1858,
author = {Thorsten Dickhaus and Ruth Heller and Anh-Tuan Hoang and Yosef Rinott},
title = {{A procedure for multiple testing of partial conjunction hypotheses based on a hazard rate inequality}},
volume = {32},
journal = {Bernoulli},
number = {1},
publisher = {Bernoulli Society for Mathematical Statistics and Probability},
pages = {274 -- 298},
year = {2026},
URL = {https://doi.org/10.3150/25-BEJ1858}
}

@article{bauer2012pair,
  title={Pair-copula constructions for non-Gaussian DAG models},
  author={Bauer, Alexander and Czado, Claudia and Klein, Thomas},
  journal={Canadian Journal of Statistics},
  volume={40},
  number={1},
  pages={86--109},
  year={2012},
  url = {https://doi.org/10.1002/cjs.10131},
  publisher={Wiley Online Library}
}

@article{li2025novel,
  title={A novel approach for estimating multi-attribute Gaussian copula graphical models},
  author={Li, Lijie and Yu, Yang and Liang, Wanfeng and Zou, Feng},
  journal={Statistics \& Probability Letters},
  volume={222},
  pages={110413},
  year={2025},
  publisher={Elsevier},
issn = {0167-7152},
url ={https://doi.org/10.1016/j.spl.2025.110413}
}

@article{hermes2024copula,
author = {Sjoerd Hermes and Joost van Heerwaarden and Pariya Behrouzi},
title = {Copula Graphical Models for Heterogeneous Mixed Data},
journal = {Journal of Computational and Graphical Statistics},
volume = {33},
number = {3},
pages = {991--1005},
year = {2024},
publisher = {Taylor \& Francis},
URL ={https://doi.org/10.1080/10618600.2023.2289545
}
}

@article{dobra2011copula,
  title={Copula Gaussian graphical models and their application to modeling functional disability data},
  author={Dobra, Adrian and Lenkoski, Alex},
  year={2011},
 journal = {The Annals of Applied Statistics},
 number = {2A},
 pages = {969--993},
 publisher = {Institute of Mathematical Statistics},
  volume = {5}, 
URL = {https://doi.org/10.1214/10-AOAS397}
}

@article{behrouzi2019detecting,
  title={Detecting epistatic selection with partially observed genotype data by using copula graphical models},
  author={Behrouzi, Pariya and Wit, Ernst C},
  journal={Journal of the Royal Statistical Society Series C: Applied Statistics},
  volume={68},
  number={1},
  pages={141--160},
  year={2019},
 issn = {0035-9254},
    url = {https://doi.org/10.1111/rssc.12287},
  publisher={Oxford University Press}
}

@INPROCEEDINGS{6288344,
  author={Yu, Hang and Dauwels, Justin and Wang, Xueou},
  booktitle={2012 IEEE International Conference on Acoustics, Speech and Signal Processing (ICASSP)}, 
  title={Copula Gaussian graphical models with hidden variables}, 
  year={2012},
  pages={2177-2180},
  organization={IEEE},
  doi={10.1109/ICASSP.2012.6288344}}

@article{hall2009using,
 title={Using the bootstrap to quantify the authority of an empirical ranking},
  author={Peter Hall and Hugh Miller},
  journal={Annals of Statistics},
  year={2009},
  volume={37},
  publisher = {Institute of Mathematical Statistics},
  pages={3929-3959},
  url={https://doi.org/10.1214/09-AOS699}
}

@article{neumann2021estimating,
  title = {Estimating the proportion of true null hypotheses under dependency: A marginal bootstrap approach},
journal = {Journal of Statistical Planning and Inference},
volume = {210},
pages = {76-86},
year = {2021},
issn = {0378-3758},
url = {https://doi.org/10.1016/j.jspi.2020.04.011},
author = {André Neumann and Taras Bodnar and Thorsten Dickhaus},
}

@article{demko1984decay,
  title={Decay rates for inverses of band matrices},
  author={Demko, Stephen and Moss, William F and Smith, Philip W},
  journal={Mathematics of computation},
  volume={43},
  number={168},
  pages={491--499},
ISSN = {00255718, 10886842},
 URL = {https://doi.org/10.1090/S0025-5718-1984-0758197-9},
  year={1984}
}

@book {MR112166,
    AUTHOR = {Parzen, Emanuel},
     TITLE = {Modern probability theory and its applications},
    SERIES = {A Wiley Publication in Mathematical Statistics},
 PUBLISHER = {John Wiley \& Sons, Inc., New York-London},
      YEAR = {1960},
     PAGES = {xv+464},
   MRCLASS = {60.00},
  MRNUMBER = {112166},
MRREVIEWER = {F.\ L.\ Spitzer},
}

@book{billingsley2017probability,
  title={Probability and measure},
  author={Billingsley, Patrick},
  year={1995},
  publisher={John Wiley \& Sons}
}

@article{isserlis1918formula,
  title={On a formula for the product-moment coefficient of any order of a normal frequency distribution in any number of variables},
  author={Isserlis, Leon},
  journal={Biometrika},
  volume={12},
  number={1/2},
  pages={134--139},
  year={1918},
  publisher={JSTOR}
}

@article{liu2013gaussian,
 author = {Weidong Liu},
 journal = {The Annals of Statistics},
 number = {6},
 pages = {2948--2978},
 publisher = {Institute of Mathematical Statistics},
 title = {Gaussian graphical model estimation with false discovery rate control},
 volume = {41},
 year = {2013},
 ISSN = {00905364, 21688966},
 URL = {https://doi.org/10.1214/13-AOS1169}
}

@article{chen2019uniformly,
title = {Uniformly consistently estimating the proportion of false null hypotheses via Lebesgue–Stieltjes integral equations},
journal = {Journal of Multivariate Analysis},
volume = {173},
pages = {724-744},
year = {2019},
issn = {0047-259X},
url = {https://doi.org/10.1016/j.jmva.2019.06.003},
author = {Xiongzhi Chen}
}

@article{chen2025uniformly,
  title = {Uniformly consistent proportion estimation for composite hypotheses via integral equations: ``the case of {Gamma} random variables''},
  author={Chen, Xiongzhi},
  journal={Annals of the Institute of Statistical Mathematics},
  volume={77},
  number={4},
  pages={649--684}, 
  year={2025},
  publisher={Springer},
  url={https://doi.org/10.1007/s10463-025-00930-3}
}

@article{friedman2008sparse,
  title={Sparse inverse covariance estimation with the graphical lasso},
  author={Friedman, Jerome and Hastie, Trevor and Tibshirani, Robert},
  journal={Biostatistics},
  volume={9},
  number={3},
  pages={432--441},
  year={2008},
    url = {https://doi.org/10.1093/biostatistics/kxm045},
  publisher={Oxford University Press}
}

@article{d2008first,
  title={First-order methods for sparse covariance selection},
  author={d'Aspremont, Alexandre and Banerjee, Onureena and El Ghaoui, Laurent},
  journal={SIAM Journal on Matrix Analysis and Applications},
  volume={30},
  number={1},
  pages={56--66},
  year={2008},
URL = {https://doi.org/10.1137/060670985},
  publisher={SIAM}
}

@article{yuan2010high,
  title={High dimensional inverse covariance matrix estimation via linear programming},
  author={Yuan, Ming},
  journal={The Journal of Machine Learning Research},
  volume={11},
  pages={2261--2286},
  year={2010},
  publisher={JMLR. org},
  url= {http://jmlr.org/papers/v11/yuan10b.html}
}

@misc{zhang2010estimation,
  author = {Zhang, C.},
  title = {Estimation of Large Inverse Matrices and Graphical Model Selection},
  year = {2010},
  note = {Technical Report, Department of Statistics and Biostatistics, Rutgers University}
}

@article{meinshausen2006high,
  author = {Nicolai Meinshausen and Peter B{\"u}hlmann},
title = {{High-dimensional graphs and variable selection with the Lasso}},
volume = {34},
journal = {The Annals of Statistics},
number = {3},
publisher = {Institute of Mathematical Statistics},
pages = {1436 -- 1462},
year = {2006},
URL = {https://doi.org/10.1214/009053606000000281}
}

@article{liu2012high,
  title={High-dimensional semiparametric Gaussian copula graphical models},
  author={Liu, Han and Han, Fang and Yuan, Ming and Lafferty, John and Wasserman, Larry},
  year={2012},
 journal = {The Annals of Statistics},
 number = {4},
 pages = {2293--2326}, 
 publisher = {Institute of Mathematical Statistics},
 volume = {40},
URL = {https://doi.org/10.1214/12-AOS1037}
}

@article{cai2011constrained,
  title={A constrained l-1 minimization approach to sparse precision matrix estimation},
  author={Cai, Tony and Liu, Weidong and Luo, Xi},
  journal={Journal of the American Statistical Association},
  volume={106},
  number={494},
  pages={594--607},
  year={2011},
URL = { https://doi.org/10.1198/jasa.2011.tm10155
},
  publisher={Taylor \& Francis}
}

@article{xue2012regularized,
  ISSN = {00905364, 21688966},
 URL = {https://doi.org/10.1214/12-AOS1041},
 author = {Lingzhou Xue and Hui Zou},
 journal = {The Annals of Statistics},
 number = {5},
 pages = {2541--2571},
 publisher = {Institute of Mathematical Statistics},
 title = {Regularized rank-based estimation of high-dimensional nonparanormal graphical models},
 volume = {40},
url={https://doi.org/10.1214/12-AOS1041},
 year = {2012}
}

@book{lauritzen1996graphical,
  title={Graphical models},
  author={Lauritzen, Steffen L},
  volume={17},
  year={1996},
  publisher={Clarendon Press},
}

@article{langaas2005estimating,
  title={Estimating the proportion of true null hypotheses, with application to DNA microarray data},
  author={Langaas, Mette and Lindqvist, Bo Henry and Ferkingstad, Egil},
  journal={Journal of the Royal Statistical Society Series B: Statistical Methodology},
  volume={67},
  number={4},
  pages={555--572},
  year={2005},
  publisher={Oxford University Press},
   issn = {1369-7412},
    url = {https://doi.org/10.1111/j.1467-9868.2005.00515.x}
}

@article{storey2002direct,
   author = {John D. Storey},
 journal = {Journal of the Royal Statistical Society. Series B (Statistical Methodology)},
 number = {3},
 pages = {479--498},
 publisher = {[Royal Statistical Society, Wiley]},
 title = {A Direct Approach to False Discovery Rates},
 volume = {64},
 year = {2002},
  ISSN = {13697412, 14679868},
 URL = {https://doi.org/10.1111/1467-9868.00346}
}

@article{storey2004strong,
  title={Strong control, conservative point estimation and simultaneous conservative consistency of false discovery rates: a unified approach},
  author={Storey, John D and Taylor, Jonathan E and Siegmund, David},
  journal={Journal of the Royal Statistical Society Series B: Statistical Methodology},
  volume={66},
  number={1},
  pages={187--205},
  year={2004},
  publisher={Oxford University Press},
issn = {1369-7412},
    url = {https://doi.org/10.1111/j.1467-9868.2004.00439.x}
}

@article{benjamini2000adaptive,
  title={On the adaptive control of the false discovery rate in multiple testing with independent statistics},
  author={Benjamini, Yoav and Hochberg, Yosef},
  journal={Journal of educational and Behavioral Statistics},
  volume={25},
  number={1},
  pages={60--83},
  year={2000},
  publisher={Sage Publications Sage CA: Los Angeles, CA},
  url={ 
https://doi.org/10.3102/10769986025001060}
}

@article{patra2016estimation,
    author = {Patra, Rohit Kumar and Sen, Bodhisattva},
    title = {Estimation of a Two-component Mixture Model with Applications to Multiple Testing},
    journal = {Journal of the Royal Statistical Society Series B: Statistical Methodology},
    volume = {78},
    number = {4},
    pages = {869-893},
    year = {2016},
    month = {01},
    issn = {1369-7412},
    url = {https://doi.org/10.1111/rssb.12148}
}

@article{jin2008proportion,
    author = {Jin, Jiashun},
    title = {Proportion of Non-Zero Normal Means: Universal Oracle Equivalences and Uniformly Consistent Estimators},
    journal = {Journal of the Royal Statistical Society Series B: Statistical Methodology},
    volume = {70},
    number = {3},
    pages = {461-493},
    year = {2008},
    month = {04},
    issn = {1369-7412},
    url = {https://doi.org/10.1111/j.1467-9868.2007.00645.x}
}

@article{jin2007estimating,
author = {Jiashun Jin and T. Tony Cai},
title = {Estimating the Null and the Proportion of Nonnull Effects in Large-Scale Multiple Comparisons},
journal = {Journal of the American Statistical Association},
volume = {102},
number = {478},
pages = {495--506},
year = {2007},
publisher = {Taylor \& Francis},
URL = { https://doi.org/10.1198/016214507000000167
}
}

@article{benjamini1995controlling,
  title={Controlling the false discovery rate: a practical and powerful approach to multiple testing},
  author={Benjamini, Yoav and Hochberg, Yosef},
  journal={Journal of the Royal statistical society: series B (Methodological)},
  volume={57},
  number={1},
  pages={289--300},
  year={1995},
  publisher={Wiley Online Library},
  url={https://doi.org/10.1111/j.2517-6161.1995.tb02031.x}
}

@article{schweder1982plots,
  ISSN = {00063444, 14643510},
 author = {T. Schweder and E. Spjøtvoll},
 journal = {Biometrika},
 number = {3},
 pages = {493--502},
 publisher = {[Oxford University Press, Biometrika Trust]},
 title = {Plots of P-Values to Evaluate Many Tests Simultaneously},
  url={https://doi.org/10.1093/biomet/69.3.493},
 volume = {69},
 year = {1982}
}

@article{benjamini2001control,
  title={The control of the false discovery rate in multiple testing under dependency},
  author={Benjamini, Yoav and Yekutieli, Daniel},
  journal={The annals of statistics},
  volume={29},
  number={4},
  URL = { https://doi.org/10.1214/aos/1013699998},
  pages={1165--1188},
  year={2001},
  publisher={Institute of Mathematical Statistics}
}

@article{sarkar2002some,
  title={Some results on false discovery rate in stepwise multiple testing procedures},
  author={Sarkar, Sanat K},
  journal={The Annals of Statistics},
  volume={30},
  number={1},
  pages={239--257},
  year={2002},
  url={ https://doi.org/10.1214/aos/1015362192},
  publisher={Institute of Mathematical Statistics}
}

@article{efron2001empirical,
  title={Empirical Bayes analysis of a microarray experiment},
  author={Efron, Bradley and Tibshirani, Robert and Storey, John D and Tusher, Virginia},
  journal={Journal of the American statistical association},
  volume={96},
  number={456},
  pages={1151--1160},
  year={2001},
  URL = {https://doi.org/10.1198/016214501753382129},
  publisher={Taylor \& Francis}
}

@article{das2021bound,
  title={Bound on FWER for correlated normal},
  author={Das, Nabaneet and Bhandari, Subir Kumar},
  journal={Statistics \& Probability Letters},
  volume={168},
  pages={108943},
  year={2021},
  url = {https://doi.org/10.1016/j.spl.2020.108943},
  publisher={Elsevier}
}

@article{li2021ggm,
    title = {GGM Knockoff Filter: False Discovery Rate Control for Gaussian Graphical Models}, 
    author = {Li, Jinzhou and Maathuis, Marloes H.},
    journal = {Journal of the Royal Statistical Society Series B: Statistical Methodology},
    volume = {83},
    number = {3},
    pages = {534-558},
    year = {2021},
    month = {07},
    issn = {1369-7412},
    url = {https://doi.org/10.1111/rssb.12430}
}

@article{zhou2022reproducible,
title = {Reproducible learning in large-scale graphical models},
journal = {Journal of Multivariate Analysis},
volume = {189},
pages = {104934},
year = {2022},
issn = {0047-259X},

url = {https://doi.org/10.1016/j.jmva.2021.104934},
author = {Jia Zhou and Yang Li and Zemin Zheng and Daoji Li}
}

@article{zhou2026reproducible,
   title={Reproducible Learning in Large-Scale Multiple Graphical Models},
  author={Zhou, Jia and Pan, Guangming and Zheng, Zeming and Tan, Changchun},
  journal={Statistica Sinica},
  volume={36},
  number={4},
  year={2026},
  doi={10.5705/ss.202023.0099}
}

\end{document}